\newtheorem{theorem}{Theorem}
\newtheorem{proposition}[theorem]{Proposition}
\newtheorem{lemma}[theorem]{Lemma}
\newtheorem{conjecture}[theorem]{Conjecture}
\newtheorem{corollary}[theorem]{Corollary}
\theoremstyle{remark}
\newtheorem{remark}[theorem]{Remark}
\theoremstyle{definition}
\newtheorem{definition}[theorem]{Definition}
\newtheorem{example}[theorem]{Example}
\long\def\comment#1\endcomment{}
\long\def\tmpcomment#1\endtmpcomment{#1}
\begin{document}


\title{Lattice gauge theory and a random-medium Ising model}

\author{M. Skopenkov}
\date{}

\maketitle

\begin{abstract}
We study linearization of lattice gauge theory. Linearized theory approximates lattice gauge theory in the same manner as the loop O(n)-model approximates the spin O(n)-model. Under mild assumptions, we show that the expectation of an observable in linearized Abelian gauge theory coincides with the expectation in the Ising model with random edge-weights. We find a similar relation between Yang-Mills theory and 4-state Potts model. For the latter, we introduce a new observable.

\comment
Percolation, lattice gauge theory, and random graphs
M. Skopenkov
NRU Higher School of Economics
Institute for Information Transmission Problems RAS

A model of percolation is a uniform random coloring of the hexagons of a honeycomb lattice into 2 colors. We show that such percolation appears naturally in a linearization of lattice gauge theory (used for computations in particle physics). Linearized gauge theory approximates lattice gauge theory in the same manner as the loop O(n)-model approximates the spin O(n)-model.
We prove a combinatorial identity between expectations of the same observable in the linearized Abelian gauge theory and the Ising model with random edge-weights. We find a similar relation between non-Abelian Yang-Mills theory and
uniform random coloring into 4 colors (the Potts model). We also introduce a new observable in the latter model. All the models are going to be defined; no background in physics is assumed.
\endcomment

\smallskip

\noindent{\bf Keywords}: percolation, lattice gauge theory, Potts model, O(n) model, graph coloring

\noindent{\bf 2010 MSC}: 82B20, 81T25, 81T13
\end{abstract}

\footnotetext[0]{The work is supported by Ministry of Science and Higher Education of the Russian Federation, agreement N075-15-2019-1619.
}



\begin{tabular}{p{0.40\textwidth}p{0.50\textwidth}}
& \textit{Dedicated to the last real scientists, brave to face real difficulties, not sweeping them under the rug.}
\end{tabular}


\bigskip


We prove a combinatorial identity meaning that the expectation of an observable in (pre)linearized $2$-dimensional Abelian lattice gauge theory equals its expectation in the Ising model with random edge-weights, and a similar identity for Yang-Mills theory (see Proposition~\ref{prop-linearization-intro}; all the notions are defined below).

This allows to translate known deep results on the Ising model and its relatives \cite{Smirnov-01,KS-20} to linearized lattice gauge theory. E.g. we provide an observable in the latter, having conformally invariant continuum limit (see Corollary~\ref{cor-assignment} and Remark~\ref{rem-conformal}).
\emph{Linearized} lattice gauge theory (see Definitions~\ref{def-lattice-gauge}, \ref{def-linearized-gauge}, \ref{def-linearized-gauge-general}) approximates the ordinary one in the same manner as the loop O(n)-model approximates the spin O(n)-model \cite[\S1]{Duminil-Copin-etal-14}:  first the exponential weight is replaced by a linear one, and then the measure is simplified without changing the partition function (as well as the expectation of observables having certain reflection symmetries).

Thus our identity extends the ones known in the O(n)-model \cite[Appendix~A]{Duminil-Copin-etal-14}. In Abelian gauge theory, a related Banks--Myerson--Kogut identity was used by A.~Guth to prove a phase transition \cite{Guth-80}. For the Yang-Mills case, we could not find any analoguous identities in the literature.



\begin{proposition}\label{prop-linearization-intro} (See Figure~\ref{fig2}) Let $\Omega$ be a polygon triangulated by regular triangles, $\beta\in\mathbb{R}$. Let $F,E,V$ be the sets of faces, nonboundary edges, nonboundary vertices respectively; $F,E,V\ne\emptyset$. Let $f$ and $g$ be real-valued integrable functions on $U(1)^E$ and $SU(2)^E$ respectively, invariant under the reflection in each coordinate hyperplane of $\mathbb{R}^{2{|E|}}\supset U(1)^E$ and $\mathbb{R}^{4{|E|}}\supset SU(2)^E$. Then
\vspace{-0.2cm}
  \begin{multline*}
  \int\limits_{U(1)^{E}} dU\,f(U)
  \prod_{ABC\in F}
  \left(1+\beta
  \,\mathrm{Re}\,(U(AB)U(BC)U(CA))
  \right)
  =\\[-0.7cm]=
  \int\limits_{[0,2\pi]^{E}}  d\theta\, f(e^{i\theta})\left(1+\beta^{|F|}\,\sum_{\sigma\in \{+1,-1\}^V}
  \prod_{\substack{AB\in E:\\\sigma(A)=\sigma(B)}}   \cos^2\theta(AB)
  \prod_{\substack{AB\in E:\\\sigma(A)\ne\sigma(B)}}  \sin^2\theta(AB)\right);
  \end{multline*}
  and
  \vspace{-0.4cm}
  \begin{multline*}
  \int\limits_{SU(2)^{E}} dU\,g(U)\prod_{ABC\in F} \left(1+\beta
  \,\mathrm{Re}\,(U(AB)U(BC)U(CA))
  \right)
  =\\[-0.7cm]=
  \int\limits_{SU(2)^{E}}  dU\,g(U)\left(1+\beta^{|F|}\,\sum_{H\in \{1,i,j,k\}^V}\prod_{AB\in E} \mathrm{Re}^2\,(H(A)^*U(AB)H(B))\right),
  \end{multline*}
where we set $U(AB),\sigma(A),H(A):=1$ for each boundary edge $AB$ or boundary vertex $A$.
\end{proposition}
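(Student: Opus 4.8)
The plan is to reduce each identity to an equality of the two bracketed integrands after symmetrization. Since $f$ (resp. $g$) and the Haar measure are invariant under the group $G=\prod_e G_e$ generated by the coordinate reflections, for any kernel $K$ one has $\int f\,K=\int f\,PK$, where $P$ is the averaging projector onto $G$-invariant functions; as the right-hand kernels are manifestly $G$-invariant, it suffices to prove $PK_{\mathrm{left}}=K_{\mathrm{right}}$. I would expand the plaquette product as $\prod_{T\in F}(1+\beta\,\mathrm{Re}\,W_T)=\sum_{S\subseteq F}\beta^{|S|}\prod_{T\in S}\mathrm{Re}\,W_T$ (writing $W_T=U(AB)U(BC)U(CA)$) and compute $P$ of each summand.

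For the $U(1)$ identity the key point is that each circle sits in $\mathbb R^2$ with \emph{two} coordinate reflections, $\theta_e\mapsto-\theta_e$ and $\theta_e\mapsto\pi-\theta_e$; together they force $f$ to be even and $\pi$-periodic in every $\theta_e$, so $P$ keeps only even Fourier frequencies. Writing $\mathrm{Re}\,W_T=\cos(\sum_{e\in\partial T}\pm\theta_e)$ and expanding into exponentials indexed by $\epsilon_T\in\{-1,0,+1\}$, each interior edge acquires frequency $c_e=\epsilon_{T_1}-\epsilon_{T_2}$; since $P$ kills odd frequencies, only configurations with all $c_e$ even survive, i.e. all faces of the connected dual graph carry the same parity. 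Hence only $S=\varnothing$ (giving $1$) and the all-$\{\pm1\}$ face-sign configurations survive, producing $1+(\beta/2)^{|F|}\sum_{\epsilon\in\{\pm1\}^F}\prod_{e\in D(\epsilon)}\cos2\theta_e$, where $D(\epsilon)$ is the set of edges separating differently-signed faces. On the other side I would expand $\cos^2=\tfrac12(1+\cos2\theta)$, $\sin^2=\tfrac12(1-\cos2\theta)$ and sum over $\sigma$, getting $1+2^{|V|-|E|}\beta^{|F|}\sum_{D}\prod_{e\in D}\cos2\theta_e$ over even subgraphs $D$. The two sums match because the domain walls $D(\epsilon)$ range exactly over the $\mathbb F_2$-cycle space (the even subgraphs), each value hit twice, and the prefactors agree by Euler's relation $|F|+|V|-|E|=1$ for a triangulated disk.

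For the $SU(2)$ identity, $\mathrm{Re}\,U=\tfrac12\mathrm{Tr}\,U$ is homogeneous linear in the coordinates $(a,b,c,d)$ of $S^3$, and $G_e=(\mathbb Z/2)^4$ is the full group of coordinate sign changes, so $P$ discards every monomial of odd degree in some edge. Because each interior edge lies in exactly two faces, $\prod_{T\in S}\mathrm{Re}\,W_T$ is linear in that edge whenever exactly one of its faces lies in $S$, and such terms die under $P$; thus only $S=\varnothing$ (giving $1$) and $S=F$ survive. For $S=F$, $P$ forces each edge to contribute a single squared coordinate, so $P(\prod_F\mathrm{Re}\,W_T)=\sum_{\mu\colon E\to\{0,1,2,3\}}c(\mu)\prod_e (x_e^{\mu_e})^2$, where $c(\mu)\in\{0,\pm1\}$ collects, for each face, the sign $\mathrm{Re}(\mathbf e_{\mu_1}\mathbf e_{\mu_2}\mathbf e_{\mu_3})$ of a quaternion triple product together with the conjugation signs from reversing edges ($\mathbf e_0,\dots,\mathbf e_3$ denoting $1,i,j,k$). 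Identifying $\{0,1,2,3\}\cong\{1,i,j,k\}\cong(\mathbb Z/2)^2=Q_8/\{\pm1\}$, each triple product vanishes unless $\mu_1+\mu_2+\mu_3=0$, i.e. unless $\mu$ is a relative $1$-cocycle. On the right I would use $\mathrm{Re}(H(A)^*U_eH(B))=\pm x_e^{\mu_e}$, a signed coordinate whose axis is $\mu_e=(\delta\bar H)_e$, so that the right kernel equals $\sum_\mu\#\{H\colon\delta\bar H=\mu\}\prod_e(x_e^{\mu_e})^2$; as $H^1(\Omega,\partial\Omega)=0$ on a disk, cocycles are coboundaries and each such $\mu$ has a unique $H$, reducing everything to the single statement $c(\mu)=+1$.

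The hard part will be exactly this sign. Lifting $\bar H$ to $H\in\{1,i,j,k\}^V$ and writing each basis factor as $\mathbf e_{\mu_e}=\zeta_e\,H(B)H(A)^{-1}$, the conjugation signs from the two opposite orientations of each interior edge pair up (via $\mathbf e_{\mu_e}^2=\pm1$) and cancel, leaving $c(\mu)=\prod_T Q_T$ with $Q_T=H(A_2)H(A_1)^{-1}H(A_3)H(A_2)^{-1}H(A_1)H(A_3)^{-1}\in[Q_8,Q_8]=\{\pm1\}$. Because the exact connection $\gamma(A\!\to\!B)=H(B)H(A)^{-1}$ has trivial holonomy around $T$, one rewrites $Q_T$ as the commutator $[\gamma^{-1},\gamma'^{-1}]=(-1)^{\langle\mu,\mu'\rangle}$ of any two of its edge-variables, where $\langle\cdot,\cdot\rangle$ is the symplectic form on $(\mathbb Z/2)^2$. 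Hence $\prod_T Q_T=(-1)^{\langle\mu\cup\mu,[\Omega,\partial\Omega]\rangle}$, the cup-square of the relative $1$-cocycle $\mu$ evaluated on the fundamental class; since $\mu=\delta\bar H$ is a relative coboundary we have $[\mu]=0$, so $[\mu\cup\mu]=0$ and the sign is $+1$. This cohomological vanishing of the cup square is precisely what makes the non-Abelian identity hold, and it is the step I expect to demand the most care — chiefly the orientation and vertex-ordering conventions in the simplicial cup product and in its pairing with $[\Omega,\partial\Omega]$.
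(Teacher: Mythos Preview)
Your overall strategy matches the paper's: expand the plaquette product, use the coordinate reflections to kill every term except $S=\varnothing$ and $S=F$, identify the surviving edge-labelings with a (relative) $1$-cocycle valued in $Q_8/\{\pm1\}$, and invoke simple connectivity ($H^1=0$) to write the cocycle as a coboundary $\delta H$. The paper carries this out by decomposing each $U(e)$ directly in the real coordinate basis (the $\cos\theta/\sin\theta$ basis for $U(1)$, the $1,i,j,k$ basis for $SU(2)$), which is exactly the basis adapted to the reflection group; the $U(1)$ case is worked out via ``Feynman diagrams'' in a small example and then subsumed in the $SU(2)$ computation. Your $U(1)$ route via exponentials and an Euler-formula matching of face-sign configurations with even subgraphs is a legitimate alternative, with one caveat: in the paper's global edge-orientation convention the same angle $\theta_e$ enters both incident plaquettes with the \emph{same} sign, so $n_e=\epsilon_{T_1}+\epsilon_{T_2}$ and the $\cos 2\theta_e$ factors sit on the \emph{complement} of the $\epsilon$-domain wall rather than on it; your counting argument survives this unchanged.

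The one place you genuinely diverge is the sign $c(\mu)=+1$ in the $SU(2)$ case. The paper isolates this as a separate lemma, proved by an elementary face-peeling induction using only associativity, and applies it \emph{before} constructing $H$. Your cup-square argument is correct but is an artifact of your convention $\mathbf e_{\mu_e}=\zeta_e\,H(B)H(A)^{-1}$: with the paper's opposite choice $u(AB)=\pm H(A)H(B)^*$ the product $u(AB)u(BC)u(CA)$ telescopes to $\zeta_{AB}\zeta_{BC}\zeta_{CA}$ directly, whence $\prod_T c_T=\prod_e\zeta_e^{\deg e}=\prod_{e\subset\partial\Omega}\zeta_e=1$ because both $u$ and $H(A)H(B)^*$ equal $1$ on boundary edges. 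So the ``hard part'' you anticipate dissolves under the right ordering convention; the paper's induction and your cup-product interpretation both recover this same fact, but neither is strictly necessary once $H$ has been produced. What your approach buys is a conceptual explanation (the obstruction is a cup square of a coboundary), at the cost of the orientation bookkeeping you rightly flag; what the paper's buys is a short self-contained argument with no cohomological machinery.
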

\vspace{-0.1cm}

\begin{figure}[hbt]
\centering
\includegraphics[width=2.7cm]{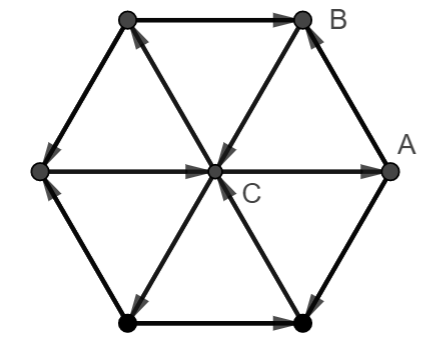}
\caption{A hexagon triangulated by $6$ regular triangles; see Example~\ref{ex-small-hexagon}}
\label{fig2}
\end{figure}

The following more abstract result expresses the relations among the models in detail; the required definitions are given in the next sections.

\begin{proposition} \label{th-split-intro}
For each polygon $\Omega$ triangulated by regular triangles and maps $\sigma$, $h$ of the set of boundary vertices to $\{+1,-1\}$, $\{1,i,j,k\}$ 
respectively, the following probability spaces are isomorphic:
\begin{align*}
	\hspace{-0.8cm}
	&\mathrm{LinearizedGaugeTheory}(U(1),\infty,\Omega,\sqrt{\sigma})
\ltimes \mathrm{Percolation}(\Omega,\sigma)
	&\cong
	&\mathrm{RandomMediumIsing}(\Omega,\sigma)\\
	&&\cong
	&\mathrm{LinearizedHiggs}(U(1),\{1,i\},0,\infty,\Omega,\sqrt{\sigma}),\\
	&\mathrm{LinearizedGaugeTheory}(SU(2),\infty,\Omega,h)
\ltimes \mathrm{Potts}(4,\Omega,\infty,h)
	&\cong
	&\mathrm{LinearizedHiggs}(SU(2),\{1,i,j,k\},0,\infty,\Omega,h).
\end{align*}
\end{proposition}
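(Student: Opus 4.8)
The plan is to realise each claimed isomorphism as an explicit measure-preserving bijection between the underlying configuration spaces and then to check that the two measures agree by a direct computation whose algebraic core is Proposition~\ref{prop-linearization-intro}. All the spaces in question are built on the same combinatorial data: an edge-field (the gauge field $U\in G^E$, respectively the random medium) together with a vertex-field (the Ising spin $\sigma$, the Higgs field $H$, or the percolation colouring), with boundary values fixed by $\sigma$ or $h$. So each isomorphism will amount to matching edge-fields with edge-fields and vertex-fields with vertex-fields, and then comparing the two densities term by term.

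I would begin with the middle isomorphism $\mathrm{RandomMediumIsing}(\Omega,\sigma)\cong\mathrm{LinearizedHiggs}(U(1),\{1,i\},0,\infty,\Omega,\sqrt\sigma)$, which I expect to be the purely algebraic one. Identify the Ising spin $\sigma\in\{\pm1\}^V$ with the Higgs field $H=\sqrt\sigma\in\{1,i\}^V$ (so $+1\mapsto 1$, $-1\mapsto i$) and leave the medium/gauge field $U=e^{i\theta}$ unchanged. For $U(AB)=e^{i\theta(AB)}$ and $H(A),H(B)\in\{1,i\}$ one computes $H(A)^*U(AB)H(B)=e^{i\theta(AB)}$ when $H(A)=H(B)$ and $\pm i\,e^{i\theta(AB)}$ when $H(A)\ne H(B)$, so that $\mathrm{Re}^2(H(A)^*U(AB)H(B))$ equals $\cos^2\theta(AB)$ on agreeing edges and $\sin^2\theta(AB)$ on disagreeing edges. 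This is precisely the random-medium Ising edge-weight, so under the identification the two joint densities coincide and the map is measure-preserving.

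For the first isomorphism $\mathrm{LinearizedGaugeTheory}(U(1),\infty,\Omega,\sqrt\sigma)\ltimes\mathrm{Percolation}(\Omega,\sigma)\cong\mathrm{RandomMediumIsing}(\Omega,\sigma)$ I would read $\ltimes$ as the disintegration of the random-medium Ising measure into its medium-marginal and the conditional law of the spins. The key input is the first identity of Proposition~\ref{prop-linearization-intro}: its right-hand side displays the medium measure $d\theta$ weighted by $\sum_{\sigma}\prod_{AB:\,\sigma(A)=\sigma(B)}\cos^2\theta(AB)\prod_{AB:\,\sigma(A)\ne\sigma(B)}\sin^2\theta(AB)$, that is, the Ising partition function in the medium $\theta$, and the identity equates this with the linearized $U(1)$ gauge weight $\prod_{ABC\in F}(1+\beta\,\mathrm{Re}(U(AB)U(BC)U(CA)))\,dU$ (the regime $\infty$ singling out the $\beta^{|F|}$ term). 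Hence the medium-marginal of $\mathrm{RandomMediumIsing}(\Omega,\sigma)$ is exactly the base $\mathrm{LinearizedGaugeTheory}(U(1),\infty,\Omega,\sqrt\sigma)$, while the conditional law of $\sigma$ given $\theta$ is $\propto\prod_{AB:\,\sigma(A)=\sigma(B)}\cos^2\theta(AB)\prod_{AB:\,\sigma(A)\ne\sigma(B)}\sin^2\theta(AB)$, which is the percolation fibre $\mathrm{Percolation}(\Omega,\sigma)$. This realises the joint space as the stated semidirect product.

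The third isomorphism is handled the same way using the second identity of Proposition~\ref{prop-linearization-intro}, now with the Higgs field valued in the quaternion units $\{1,i,j,k\}$: the four values supply the four Potts states, the coupling $\mathrm{Re}^2(H(A)^*U(AB)H(B))$ plays the role of the Potts edge-weight, and marginalising over the interior colouring $H\in\{1,i,j,k\}^V$ reproduces the linearized $SU(2)$ gauge weight, leaving the conditional $\mathrm{Potts}(4,\Omega,\infty,h)$ on the fibre. The main obstacle I anticipate is exactly this non-Abelian step: unlike the $U(1)$ case, where the coupling collapses to the clean $\cos^2/\sin^2$ dichotomy, for $SU(2)$ one must verify that the four quaternionic shifts $H\mapsto H(A)^*UH(B)$ partition the coupling so that the sum over $H$ matches the $\beta^{|F|}$ term of the second identity; keeping track of the reflection symmetries that Proposition~\ref{prop-linearization-intro} requires, and of the normalising partition functions in the $\infty$-limit, is where the argument will need the most care.
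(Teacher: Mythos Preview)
Your treatment of the second isomorphism is fine and matches the paper. The first and third isomorphisms, however, rest on a misreading of the symbol $\ltimes$, and this creates a genuine gap.

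In the paper (Definition~\ref{def-quotient}) a semi-direct product $\mathrm{X}\ltimes\mathrm{Y}$ is \emph{not} a disintegration into a base and a conditional fibre; it is any measure on $X\times Y$ whose two \emph{marginals} are $\mu_X$ and $\mu_Y$. So what has to be checked for the first isomorphism is that the $U$-marginal of $\mathrm{RandomMediumIsing}(\Omega,\sigma)$ is $\mathrm{LinearizedGaugeTheory}(U(1),\infty,\Omega,\sqrt\sigma)$ and that the $\Sigma$-marginal is the \emph{uniform} measure $\mathrm{Percolation}(\Omega,\sigma)$. The $U$-marginal is a tautology, since $P''_\infty(U)$ is by Definition~\ref{def-linearized-gauge} \emph{defined} as $\tfrac{1}{Z''_\infty}\sum_\Sigma P(U,\Sigma)$; Proposition~\ref{prop-linearization-intro} is not needed here and plays no role in the paper's argument. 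The nontrivial point is the $\Sigma$-marginal, and your sentence ``the conditional law of $\sigma$ given $\theta$ is $\propto\prod\cos^2\cdot\prod\sin^2$, which is the percolation fibre $\mathrm{Percolation}(\Omega,\sigma)$'' is simply false: that conditional is the Ising law in the medium $\theta$, not the uniform law. What makes the $\Sigma$-marginal uniform is a different computation: $\int_{U(1)^E}P(U,\Sigma)\,dU$ does not depend on $\Sigma$, because $\int_0^{2\pi}\cos^2\theta\,d\theta=\int_0^{2\pi}\sin^2\theta\,d\theta$, so integrating out $U$ leaves the same constant for every $\Sigma$.

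The same correction applies to the $SU(2)$ case: the $U$-marginal equals $\mathrm{LinearizedGaugeTheory}(SU(2),\infty,\Omega,h)$ by the very definition of $P''_{\infty,S}$, and the $H$-marginal is uniform (hence equals $\mathrm{Potts}(4,\Omega,\infty,h)$) because $\int_{SU(2)^E}\prod_{AB}\mathrm{Re}^2(H(A)^*U(AB)H(B))\,dU$ is the same for every $H\in\{1,i,j,k\}^V$, by the translation invariance of the Haar measure under $U(AB)\mapsto H(A)U(AB)H(B)^*$. None of the reflection-symmetry machinery or the second identity of Proposition~\ref{prop-linearization-intro} enters; the whole proposition is a direct unpacking of the definitions once $\ltimes$ is read correctly.
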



%


\section{Linearization of Abelian lattice gauge theory}


Let us give a zero-knowledge introduction to gauge theory; cf.~\cite{Maldacena-16, Creuz-70}.

We start with an informal toy model. Let $7$ cities be connected by oriented roads as in Figure~\ref{fig2}. Each city has its own type of goods in an unlimited quantity. E.g., city $A$ has apples, city $B$ has bananas, and city $C$ has coconuts. On each oriented road $AB$, an exchange rate $U(AB)>0$ is fixed. E.g.,  travelling from $A$ to $B$, one gets $2$ banana for an apple.

A cunning citizen can travel and exchange along a triangle $ABC$ to multiply his initial amount of goods by a factor of $U(AB)U(BC)U(CA)$.  The total speculation profit is measured by the quantity
$$
S(U):=\sum_{\text{all faces } ABC}\log^2(U(AB)U(BC)U(CA)).
$$
Here $\log^2(x)$ is chosen as a function vanishing at $x=1$ and positive for $x\ne 1$. A useful observation: for a city $C$, one can change the units of measurements, e.g., exchange dozens of coconuts instead of single ones. Such \emph{gauge transformation} multiplies the rates for all the roads starting at $C$ and divides the rates for all the roads ending at $C$ by the same value but preserves ${S}(U)$.

One can turn this economic model into a statistical-physics one by fixing the boundary rates and making the interior ones random with the probability density $P(U)$ exponentially decreasing with the growth of the speculation profit ${S}(U)$. The resulting model describes quantum electromagnetic field.



In gauge theory used in particle physics, the rates become complex numbers, quaternions etc.

\begin{definition} \label{def-lattice-gauge} Let $G$ be a submanifold of one of the sets
$$
U(1):=\{z\in\mathbb{C}:|z|=1\},
\quad
SU(2)=\{z\in\mathbb{H}:|z|=1\},
\quad\text{ or }\quad
S^7=\{z\in\mathbb{O}:|z|=1\},
$$
of norm-one complex numbers, quaternions, and octonions respectively. (To get the idea of what follows, it is suggested to start with
the case when $G=U(1)$ everywhere, known as \emph{compact Abelian gauge theory}.) Let $\Omega$ be a polygon triangulated by  regular triangles 
Let $F$, $E$, $V$ be the sets of faces, nonboundary edges, nonboundary vertices in the triangulation respectively. Assume that all the edges are parallel to the $3$ cubic roots of $1$.
Orient the edges in the directions of the roots.
Denote by $AB$ the edge oriented from $A$ to $B$.
Fix $0\le\beta\le1$ and an element $u(AB)\in G$ for each boundary edge $AB$.

Let $G^{E}$ be the set of $G$-valued functions $U$ on the set of edges which are equal to $u$ on the boundary. Define LatticeGaugeTheory$(G,\beta,\Omega,u)$ to be the probability space $G^{E}$ with the probability density
$$
P(U)=\frac{1}{Z}\exp\left(\beta\sum_{ABC\in F}
\mathrm{Re}\,(U(AB)U(BC)U(CA))
\right).
$$
Here the sum is over all the triangles $ABC$ 
(the vertices are listed in an order compatible with the orientation of the edges) and
$Z\in\mathbb{R}$ is chosen so that the total probability $\int_{G^{E}}P(U)dU=1$, where $dU$ denotes the product of Lebesque measures (or counting measures, if $G$ is finite, i.e., a $0$-dimensional submanifold). The constants $Z',Z'',\dots$ below are chosen analogously.

Define PreLinearizedGaugeTheory$(G,\beta,\Omega,u)$ to be the 
space $G^{E}$ with the probability density
$$
P'(U)=\frac{1}{Z'}\prod_{ABC\in F}\left(1+\beta
\,\mathrm{Re}\,(U(AB)U(BC)U(CA))
\right).
$$
%

In the case when $G$ is a group, given a $G$-valued function $g$ on the set of nonboundary vertices, define the \emph{gauge transformation} $\phi_g\colon G^{E}\to G^{E}$ by the formula $[\phi_gU](AB)=g(A)U(AB)g(B)^*$.
\end{definition}

\begin{remark} Definition~\ref{def-lattice-gauge} is applicable even if $G$ is not a group, e.g., $G\!=\!S^7$ or $G\!=\! S^2\!=\! SU(2)\cap\{\mathrm{Re}\,z\!=\!0\}$. The expression for $P(U)$ is well-defined for $G\subset S^7$ because $\mathrm{Re}((xy)z)=\mathrm{Re}(x(yz))$ for any 
$x, y, z\in\mathbb{O}$ \cite[Eq.~(6.21)]{Ootsuka-etal-05}. Thus octonion gauge theory on a triangular lattice is defined easier than the continuum one \cite{Ootsuka-etal-05}. The price is that gauge transformations are not well-defined when $G$ is not a group. Also we conjecture that Proposition~\ref{prop-linearization-intro} does not remain true literally for $SU(2)$ replaced by $S^7$ or $S^2$;  see Remark~\ref{ex-octonion} and~Proposition~\ref{th-split-S2}.
The case $G=SU(2)$ is often called \emph{Yang-Mills theory}.

Definition~\ref{def-lattice-gauge} is applicable if $G$ is finite, 
e.g., $G\!=\!\{+1,-1\}$. Then the \emph{probability density} is understood as the function $P(U)$ related to the probability measure $\mu$ via $\mu(A)\!\!=\!\!\!\int_A \! P(U)dU\!:=\!\!\!\sum_{U\in A}\!P(U)$.

Definition~\ref{def-lattice-gauge} immediately generalizes to an arbitrary Lie group $G$, e.g., $G\!=\!SU(3)$. It is interesting to find an analogue of Proposition~\ref{prop-linearization-intro} in this setup. Generalization to higher-dimensional lattices is also easy \cite[\S7]{Creuz-70}.
\end{remark}


\begin{table}
  \caption{Relation between lattice gauge theory and the Ising model; see Example~\ref{ex-small-hexagon}}
  \label{tab-rel}

  \smallskip
  \centering
  \begin{tabular}{|c|c|c|c|}
    \hline
    $\sin\theta_n\cos \theta_{n+1}$ &
    $\cos\theta_n\sin \theta_{n+1}$ &
    $\sin\theta_n\sin \theta_{n+1}$ &
    $\cos\theta_n\cos \theta_{n+1}$      \\
    \hline
    & & & \\[-0.3cm]
    \includegraphics[width=0.2\textwidth]{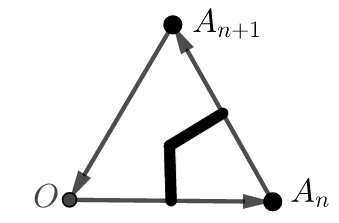} &
    \includegraphics[width=0.2\textwidth]{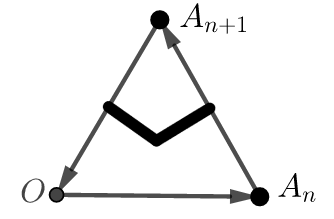} &
    \includegraphics[width=0.2\textwidth]{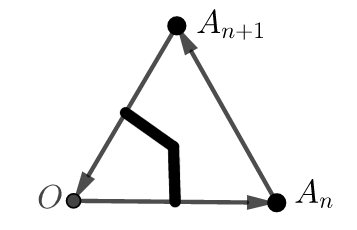} &
    \includegraphics[width=0.2\textwidth]{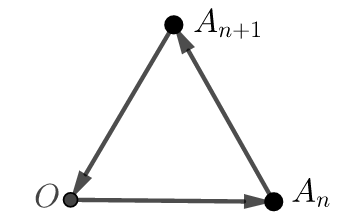}  \\
    \hline
  \end{tabular}
\smallskip

\begin{tabular}{|c|c|c|}
\hline
{\scriptsize
$\cos^2\theta_1\sin^2\theta_2\cos^2\theta_3\cos^2\theta_4\sin^2\theta_5\cos^2\theta_6$ }
&
{\scriptsize
$\sin^2\theta_1\cos^2\theta_2\sin^2\theta_3\sin^2\theta_4\cos^2\theta_5\sin^2\theta_6$ }
&
{\scriptsize
$\sin\theta_1\cos^2\theta_2\sin^2\theta_3
\sin^2\theta_4\cos\theta_5\sin\theta_5\cos^2\theta_6\cos\theta_1$
}
\\
\hline
 & & \\[-0.3cm]
\includegraphics[width=5cm]{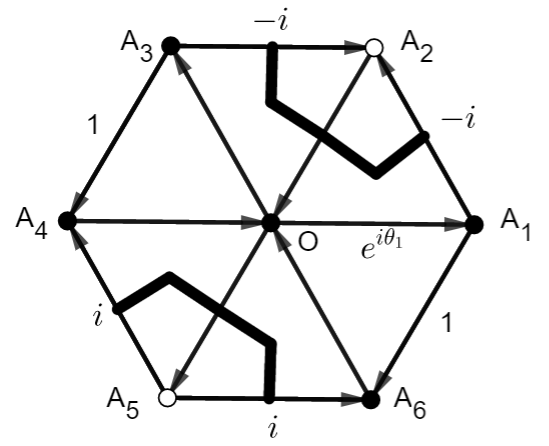} &
\includegraphics[width=5cm]{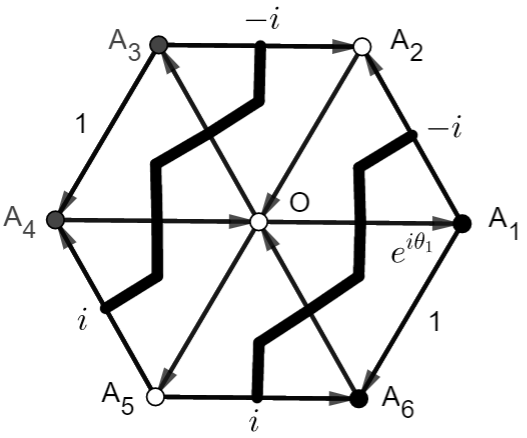} &
\includegraphics[width=5cm]{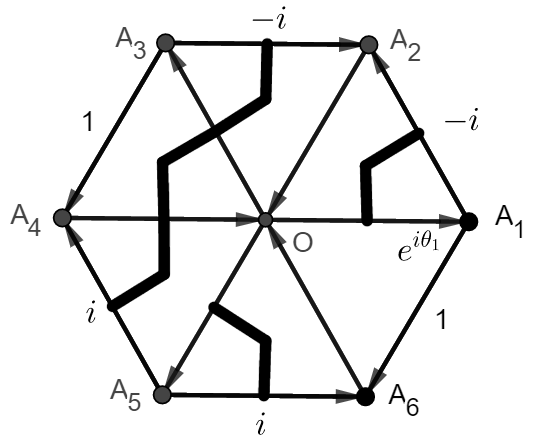} \\
\hline
\end{tabular}
\end{table}

In the 
model, one usually studies expectations of certain random variables. Those are easy to compute (on a $2$-dimensional lattice) using gauge transformations, if a random variable is invariant under the transformations, and nontrivial otherwise. Proposition~\ref{prop-linearization-intro} asserts that for a random variable having just coordinate-reflection symmetries, the computation still simplifies much: one may drop most terms in the expansion of $P'(U)$ in $\beta$. We illustrate this in the simplest nontrivial particular case.

\begin{example} \label{ex-small-hexagon}
Let $G=U(1)$ be identified with $[0,2\pi)$ by the exponential mapping.
Let $\Omega$ be a regular
hexagon $A_1A_2A_3A_4A_5A_6$ triangulated by $6$ regular triangles with a common vertex $O$; $A_7:=A_1$.
Set
$$
u(A_1A_{2})\!=\!u(A_3A_{2})\!=\!-i,\quad
u(A_3A_{4})\!=\!u(A_1A_{6})\!=\!1,\quad
u(A_5A_{4})\!=\!u(A_5A_{6})\!=\!i\quad\text{(see Table~\ref{tab-rel}, bottom)}.
$$
In PreLinearizedGaugeTheory$(U(1),\beta,\Omega,u)$, we have $Z'=(2\pi)^{6}+2\pi^{6}\beta^6$ and for each integrable function $f\colon [0,2\pi]^6\!\to\! \mathbb{R}$ invariant under the transformations $\theta_n\!\mapsto\!\pi\!\pm\!\theta_n\!\pmod{2\pi}$
for $n\!=\!1,\dots,6$ we have
\begin{align}
\mathrm{E}f&=
\int_{[0,2\pi]^6}f(\theta_1,\dots,\theta_6)
P''_\beta(\theta_1,\dots,\theta_6)\,d\theta_1\dots d\theta_6,
\label{eq-E}
\intertext{where}
P''_\beta(\theta_1,\dots,\theta_6)&:=\frac{1}{Z'} \left(1+\beta^6
\cos^2\theta_1\sin^2\theta_2\cos^2\theta_3
\cos^2\theta_4\sin^2\theta_5\cos^2\theta_6\right.\notag
\\
&\hspace{0.09\linewidth}
+\left.\beta^6\sin^2\theta_1\cos^2\theta_2\sin^2\theta_3
\sin^2\theta_4\cos^2\theta_5\sin^2\theta_6\right).\notag
\notag
\end{align}
\end{example}

\begin{remark} 
Equation~\eqref{eq-E} is only useful if $f$ is \emph{not} gauge invariant; otherwise there is a much simpler way to compute the expectation. Likewise, there is a simpler way to compute~$Z'$ than described below; but we discuss the method which automatically gives~\eqref{eq-E} as well.
\end{remark}

\begin{proof}[Proof of Example~\ref{ex-small-hexagon}]
Given $U\in U(1)^{E}$, denote $\exp(i\theta_n):=\begin{cases}U(OA_n),&\text{for $n=1,3,5$};\\ U(A_nO),&\text{for $n=2,4,6$.}\end{cases}$
Then
\begin{align*}
Z'&=
\int_{U(1)^6}dU\,\left(1+\beta\,
\mathrm{Re}\,(U(OA_1)U(A_1A_{2})U(A_{2}O))\right)
\cdots
\left(1+\beta\,
\mathrm{Re}\,(U(A_6O)U(OA_{1})U(A_1A_{6}))\right)
\notag\\
&=
\int_{[0,2\pi]^6}d\theta_1\dots d\theta_6\,
(1\!+\!\beta\sin(\theta_1\!+\!\theta_2))
(1\!+\!\beta\sin(\theta_2\!+\!\theta_3))
(1\!+\!\beta\cos(\theta_3\!+\!\theta_4))\times\notag\\
&{}\hspace{0.19\linewidth}
\times(1\!-\!\beta\sin(\theta_4\!+\!\theta_5))
(1\!-\!\beta\sin(\theta_5\!+\!\theta_6))
(1\!+\!\beta\cos(\theta_6\!+\!\theta_1))\notag\\
&=
\int_{[0,2\pi]^6}d\theta_1\dots d\theta_6\,
(1\!+\!\beta\sin\theta_1\cos\theta_2\!+\!\beta\cos\theta_1\sin\theta_2)
(1\!+\!\beta\sin\theta_2\cos\theta_3\!+\!\beta\cos\theta_2\sin\theta_3)
\times\notag\\
&{}\hspace{0.19\linewidth}
\times(1\!+\!\beta\cos\theta_3\cos\theta_4\!-\!\beta\sin\theta_3\sin\theta_4)
(1\!-\!\beta\sin\theta_4\cos\theta_5\!-\!\beta\cos\theta_4\sin\theta_5)
\times\notag\\
&{}\hspace{0.19\linewidth}
\times(1\!-\!\beta\sin\theta_5\cos\theta_6\!-\!\beta\cos\theta_5\sin\theta_6)
(1\!+\!\beta\cos\theta_6\cos\theta_1\!-\!\beta\sin\theta_6\sin\theta_1)
\notag\\
&\overset{(*)}{=}\int_{[0,2\pi]^6}d\theta_1\dots d\theta_6\,
\left(1+\beta^6\cos^2\theta_1\sin^2\theta_2\cos^2\theta_3
\cos^2\theta_4\sin^2\theta_5\cos^2\theta_6\notag\right.
\notag\\
&\hspace{0.24\linewidth}
+\left.\beta^6\sin^2\theta_1\cos^2\theta_2\sin^2\theta_3
\sin^2\theta_4\cos^2\theta_5\sin^2\theta_6
\right)
\notag\\
&=
(2\pi)^{6}+2\pi^{6}\beta^6.\label{eq-Z}
\end{align*}
Let us explain key equality (*).
We expand the product in the left-hand side of~(*)
, and to each resulting term assign a ``Feynman diagram'' as follows; see Table~\ref{tab-rel}.

First consider a term obtained by taking a summand distinct from $1$ in each factor in the left-hand side of~(*). If the taken summand has form $\sin\theta_n\cos\theta_m$, where $m=n\pm 1$, then draw a segment from the center of the triangle $OA_mA_n$ to the midpoint of $OA_n$ (see Table~\ref{tab-rel}, top-left). If it has form $\sin\theta_n\sin\theta_{n+1}$, draw $2$ segments from the center of $OA_nA_{n+1}$ to the midpoints of $OA_n$ and $OA_{n+1}$ (see Table~\ref{tab-rel}, top-middle). For the summands of the form $ \cos\theta_n\cos\theta_{n+1}$ draw nothing (see Table~\ref{tab-rel}, top-right). Also draw $4$ segments from the centers of $OA_nA_{n+1}$ to the midpoints of $A_nA_{n+1}$ for $n=1,2,4,5$. We get an even number of segments in each triangle, thus all the segments form several broken lines with the endpoints at the edge midpoints.

Now observe that if at least one of the endpoints does not belong to the boundary of $\Omega$ (see Table~\ref{tab-rel}, bottom-right), then the term has a factor $\sin\theta_n\cos\theta_n$ vanishing after integration over $\theta_n$. The only two terms giving a nonzero contribution to the integral are the ones in Table~\ref{tab-rel}, bottom-left and bottom-middle.
They are in bijection with the $2$-colorings of vertices such that $A_n$ is black for $n=1,3,4,6$ and white for $n=2,5$. (Our broken lines separate vertices of different color.)

Finally, a similar 
argument shows that taking at least one summand $1$ leads to just one term contributing to the integral. We are thus left with only $3$ contributing terms, which proves~(*). Equation~\eqref{eq-E}
is proven analogously.
\end{proof}

This example suggests to replace $P'$ by $P''_\beta$ in Definition~\ref{def-lattice-gauge}. In particular, this extends the definition to $\beta>1$.
Beware that $P''_\beta(\theta_1,\dots,\theta_6)\ne P'(e^{i\theta_1},\dots,e^{i\theta_6})$ in general.

\begin{definition} \label{def-linearized-gauge}
Assume that $G=U(1)$. Fix $\beta>0$ and a sign $\sigma(A)=\pm1$ for each boundary vertex $A$. For each boundary edge $AB$ set $u(AB)=1$, if $\sigma(A)=\sigma(B)$, and $u(AB)=i\sigma(B)$, if $\sigma(A)\ne\sigma(B)$.

Let $\{+1,-1\}^{V}$ be the set of $\pm 1$-valued functions $\Sigma$ on vertices, equal to $\sigma$ on the boundary. Denote
$$
P(U,\Sigma):=
\prod_{AB:\Sigma(A)=\Sigma(B)}   \cos^2\theta(AB)
\prod_{AB:\Sigma(A)\ne\Sigma(B)} \sin^2\theta(AB),
$$
where $U(AB)=:\cos\theta(AB)+i\sin\theta(AB)$.
Now define LinearizedGaugeTheory$(U(1),\beta,\Omega,\sqrt{\sigma})$
and LinearizedGaugeTheory$(U(1),\infty,\Omega,\sqrt{\sigma})$
to be the space $U(1)^{E}$ 
with the probability densities 
$$
P''_\beta(U)=\frac{1}{Z''_\beta}\left(
1+\beta^{|F|}\sum_{\Sigma\in\{+1,-1\}^{V}}P(U,\Sigma)\right)
\quad\text{and}\quad
P''_\infty(U)=\frac{1}{Z''_\infty}\sum_{\Sigma\in\{+1,-1\}^{V}}P(U,\Sigma).
$$
\end{definition}

\begin{remark} 
The origin of the notation $\sqrt\sigma$ and the formula for $u(AB)$ are going to become clear in the next section.
\end{remark}

\begin{example}\label{ex-f} Consider the hexagon $\Omega$ from Example~\ref{ex-small-hexagon}. Let $\sigma(A_n)$ be $+1$ for $n = 1,3,4,6$ and be $-1$ for $n=2,5$. Then for
LinearizedGaugeTheory$(U(1),\beta,\Omega,\sqrt\sigma)$,
the function $P''_\beta$ is given by the expression from that example.
The crossing probability for site percolation on $\Omega$ with the boundary condition~$\sigma$ (which  itself equals $1/2$) equals the expectation of the random variable
$$
f(\theta_1,\dots,\theta_6)=\frac{\cos^2\theta_1\sin^2\theta_2\cos^2\theta_3
\cos^2\theta_4\sin^2\theta_5\cos^2\theta_6}
{\cos^2\theta_1\sin^2\theta_2\cos^2\theta_3
\cos^2\theta_4\sin^2\theta_5\cos^2\theta_6
+\sin^2\theta_1\cos^2\theta_2\sin^2\theta_3
\sin^2\theta_4\cos^2\theta_5\sin^2\theta_6}
$$
in LinearizedGaugeTheory$(U(1),\infty,\Omega,\sqrt\sigma)$.
The random variable $f$ is \emph{not} gauge invariant.
\end{example}

In the random variable $f$, one recognizes the crossing probability in the Ising model with edge weights $\cot\theta_1,\dots,\cot\theta_6$. Such \emph{cotan weights} appear naturally, e.g., on isoradial graphs \cite{Chelkak-Smirnov-08}.
This observation is summarized in the following definition and 
Proposition~\ref{th-split-intro}.

\begin{definition} \label{def-random-media-Ising}
Define RandomMediumIsing$(\Omega,\sigma)$ to be the disconnected manifold $U(1)^{E}\times \{+1,-1\}^{V}$ with the probability density $P(U,\Sigma)/Z''_\infty$. Define Percolation$(\Omega,\sigma)$ to be the space $\{+1,-1\}^{V}$ with the counting measure divided by $2^{{|V|}}$.
%
\end{definition}

\begin{definition}\label{def-quotient}
A \emph{semi-direct product} of probability spaces $\mathrm{X}=(X,\Sigma_X,\mu_X)$ and $\mathrm{Y}=(Y,\Sigma_Y,\mu_Y)$ is a probability space $\mathrm{Z}=(X\times Y,\Sigma_X\otimes \Sigma_Y,\mu_Z)$ with any measure $\mu_Z$ such that  $\mu_Z(A\times Y)=\mu_X(A)$ and  $\mu_Z(X\times B)=\mu_Y(B)$ for each $A\in\Sigma_X$, $B\in\Sigma_Y$. Notation: $\mathrm{Z}=\mathrm{X}\ltimes\mathrm{Y}$ or $\mathrm{Z}=\mathrm{X}\rtimes\mathrm{Y}$.
\end{definition}

\begin{remark} 
	The notion is closely related but different from the \emph{semi-direct product of measures}. 
\end{remark}


\begin{proof}[Proof of the 1st isomorphism in Proposition~\ref{th-split-intro}] This is straightforward:
Let $\mu$ and $\nu$ be the probability measures in $\mathrm{LinearizedGaugeTheory}(U(1),\infty,\Omega,\sqrt\sigma)$ and
$\mathrm{RandomMediumIsing}(\Omega,\sigma)$ respectively, and $\lambda$ be the Lebesque measure on $U(1)^E\times \{+1,-1\}^{V}$.
Then for each measurable $A\subset U(1)^E$
$$
\nu(A)=\int_A P''_\infty(U)\,dU
=\frac{1}{Z''_\infty}\int_A\,\sum_{\Sigma\in\{+1,-1\}^{V}}P(U,\Sigma)\,dU
=\frac{1}{Z''_\infty}\int_{A\times \{+1,-1\}^{V}}P(U,\Sigma)\,d\lambda
=\mu(A\times\{+1,-1\}^{V}).
$$
%
Since $\int_0^{2\pi}\cos^2\theta\,d\theta=\int_0^{2\pi}\sin^2\theta\,d\theta$, it follows that $\int_{U(1)^{E}}P(U,\Sigma)\,dU$ is the same
for each $\Sigma$,
thus
$$
\mu(U(1)^E\times B)=\frac{1}{Z''_\infty}\int_{U(1)^E\times B}P(U,\Sigma)\,d\lambda=\frac{|B|}{2^{|V|}}
$$
for each  $B\subset\{+1,-1\}^{V}$. By Definition~\ref{def-quotient} we get the 1st isomorphism in Proposition~\ref{th-split-intro}.
\end{proof}

\begin{corollary}\label{cor-assignment} For each random variable $f$ on $\mathrm{Percolation}(\Omega,\sigma)$ the random variable
$$
\mathrm{E}(f\circ p_2|p_1)(U)
:=\frac{\sum_{\Sigma\in\{+1,-1\}^{V}}f(\Sigma)P(U,\Sigma)}
{\sum_{\Sigma\in\{+1,-1\}^{V}}P(U,\Sigma)}
$$
on $\mathrm{LinearizedGaugeTheory}(U(1),\infty,\Omega,\sqrt\sigma)$ has the same expectation (but \emph{not} variance).
\end{corollary}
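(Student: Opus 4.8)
The plan is to read the displayed quantity as a conditional expectation on the product space $U(1)^E\times\{+1,-1\}^V$ and then combine the tower property with the semi-direct product structure of Proposition~\ref{th-split-intro}. Write $p_1$ and $p_2$ for the two coordinate projections of this product, and let $\nu$ denote the probability measure of $\mathrm{RandomMediumIsing}(\Omega,\sigma)$, whose density is $P(U,\Sigma)/Z''_\infty$ by Definition~\ref{def-random-media-Ising}.

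First I would check that the displayed formula is genuinely a version of the conditional expectation $\mathrm{E}(f\circ p_2\mid p_1)$ with respect to $\nu$. For fixed $U$ the conditional law of $\Sigma$ places mass $P(U,\Sigma)/\sum_{\Sigma'}P(U,\Sigma')$ on each $\Sigma$, so the formula is exactly the corresponding conditional average of $f$. Formally this reduces to verifying $\int_{p_1^{-1}(A)}(f\circ p_2)\,d\nu=\int_{p_1^{-1}(A)}\mathrm{E}(f\circ p_2\mid p_1)\,d\nu$ for every measurable $A\subset U(1)^E$, which follows upon substituting the density $P(U,\Sigma)/Z''_\infty$ and cancelling the common denominator $\sum_{\Sigma'}P(U,\Sigma')$.

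With this in hand the conclusion is the tower property applied twice. By the first isomorphism in Proposition~\ref{th-split-intro}, the $p_1$-marginal of $\nu$ is the measure of $\mathrm{LinearizedGaugeTheory}(U(1),\infty,\Omega,\sqrt\sigma)$, so the expectation of $\mathrm{E}(f\circ p_2\mid p_1)$ over the latter space equals $\int(f\circ p_2)\,d\nu$. The same isomorphism identifies the $p_2$-marginal of $\nu$ with $\mathrm{Percolation}(\Omega,\sigma)$, whence $\int(f\circ p_2)\,d\nu=\mathrm{E}f$ computed on $\mathrm{Percolation}(\Omega,\sigma)$, which is the asserted equality. As a concrete check, the expectation in question equals $\frac{1}{Z''_\infty}\sum_\Sigma f(\Sigma)\int_{U(1)^E}P(U,\Sigma)\,dU$; since $\int_{U(1)^E}P(U,\Sigma)\,dU$ is independent of $\Sigma$ (established in the proof of Proposition~\ref{th-split-intro}) and $Z''_\infty$ is $2^{|V|}$ times this common value, it collapses to $2^{-|V|}\sum_\Sigma f(\Sigma)$, the percolation average. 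The failure of equality of variances then follows from the law of total variance $\mathrm{Var}(f\circ p_2)=\mathrm{E}\,\mathrm{Var}(f\circ p_2\mid p_1)+\mathrm{Var}\,\mathrm{E}(f\circ p_2\mid p_1)$, whose first summand is strictly positive for nonconstant $f$.

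The only step that is not automatic is the disintegration in the second paragraph, namely confirming that the quotient of weighted sums really is the conditional expectation for $\nu$, i.e.\ that the weights $P(U,\Sigma)$ correctly describe the joint measure and that the normalizations match. Everything afterwards is the standard tower property, so I expect this identification to be the main, though quite modest, obstacle.
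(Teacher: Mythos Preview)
Your argument is correct and matches the paper's intended approach: the corollary is stated as an immediate consequence of the first isomorphism in Proposition~\ref{th-split-intro}, and the mechanism is precisely the law of total expectation (tower property) on the product space $U(1)^E\times\{+1,-1\}^V$, using that the two marginals of the RandomMediumIsing measure are LinearizedGaugeTheory and Percolation respectively. Your concrete computation and the variance remark are bonuses the paper does not spell out.
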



\begin{remark} \label{rem-conformal}
In particular, if $f$ is the indicator function of the crossing event in $\Omega$, then by the Cardy's formula proved by S.~Smirnov \cite{Smirnov-01,KS-20}, $\mathrm{E}(f\circ p_2|p_1)$ is an observable in linearized gauge theory, which is conformally invariant in the continuum limit. (We consider the notation $\mathrm{E}(f\circ p_2|p_1)$ as indecomposable to avoid discussion of its ingredients.)
\end{remark}


\begin{remark} This result extends to \emph{Smirnov's parafermionic observable}. Let us informally sketch that. In Definition~\ref{def-lattice-gauge}, fix a nonboundary edge $AB$.
Let the function $U$ be \emph{two}-valued at the edge $AB$; let the two values $U_+(AB)$ and $U_-(AB)$ be related by $U_-(AB)=iU_+(AB)$. In the formulas for $P(U)$ and $P'(U)$, replace $U(AB)$ by $U_+(AB)$ in the summand corresponding to the triangle $ABC$ bordering upon $AB$ from the right, and by $U_-(AB)$ --- for the one from the left. In Definition~\ref{def-linearized-gauge}, replace $\{+1,-1\}^V$ by the set of loop and arc configurations with one of the endpoints at the midpoint of $AB$. In the formula for $P(U,\Sigma)$, take $\cos^2\theta(AB)$, if an arc arrives at the midpoint of $AB$ \emph{from the right}, and $\sin^2\theta(AB)$ otherwise; here $\cos\theta(AB):=\mathrm{Re}\,U_+(AB)$.
To Smirnov's parafermionic percolation observable, one naturally assigns a random variable in the resulting probability space, having the same expectation.
\end{remark}

\section{Lattice Higgs field}

Now we show that the random-medium Ising model itself appears naturally in a linearization of the lattice Higgs field: 
the former is obtained when the latter assumes only certain basis-vector values.
We follow the popular-science construction of the lattice Higgs field from \cite{Maldacena-16}.

\begin{definition} \label{def-lattice-Higgs}
Let a subset $S\subset G$ be either finite or of positive measure; in the former case equip it with the counting measure. Fix $0\le\beta,\lambda\le1$ and a map $h$ from the set of boundary vertices to the set $S$. 
For a boundary edge $AB$ set $u(AB)=h(A)h(B)^*$ and assume that $u(AB)\in G$.

Let $S^{V}$ be the set of $S$-valued functions $H$ on vertices which are equal to $h$ on the boundary. 
Define LatticeHiggs$(G,S,\beta,\lambda,\Omega,h)$ to be the probability space $G^{E}\times S^{V}$ with the probability density
$$
P(U,H)=\frac{1}{Z}\exp\left(\beta\sum_{ABC}
\mathrm{Re}\,(U(AB)U(BC)U(CA))
+\lambda\sum_{AB\in E}\mathrm{Re}\,(H(A)^*U(AB)H(B))
\right).
$$

In the case when $G$ is a group acting on $S$, given a $G$-valued function $g$ on 
vertices, define the \emph{gauge transformation} $G^{E}\times S^{V}\to G^{E}\times S^{V}$ by the formula $U(AB)\mapsto g(A)U(AB)g(B)^*$, $H(A)\mapsto g(A)H(A)$.

Define LinearizedHiggs$(G,S,\beta,\lambda,\Omega,u)$ and LinearizedHiggs$(G,S,0,\infty,\Omega,u)$ to be
the 
space $G^{E}\times S^{V}$ with the probability densities respectively
\begin{align*}
P'_{\beta,\lambda,S}(U,H)&=\frac{1}{Z'_{\beta,\lambda,S}}
\prod_{ABC\in F} \left(1+\beta\,\mathrm{Re}\,(U(AB)U(BC)U(CA))\right)
\prod_{AB\in E} \left(1+\frac{1}{2}\lambda\,\mathrm{Re}\,(H(A)^*U(AB)H(B))
\right)^2;\\
P'_{0,\infty,S}(U,H)&=\frac{1}{Z'_{0,\infty,S}}\prod_{AB\in E}
\mathrm{Re}^2\,(H(A)^*U(AB)H(B)).
\end{align*}
%
\end{definition}

\begin{remark}
We square the products over edges because 
they give no contribution to~$Z'$ otherwise.
\end{remark}


\begin{proof}[Proof of the 2nd isomorphism in Proposition~\ref{th-split-intro}] The isomorphism follows directly from Definitions~\ref{def-linearized-gauge} and~\ref{def-lattice-Higgs}:  $P'_{0,\infty,\{1,i\}}(U,H)=P(U,\Sigma)$ identically for $\Sigma=H^2$ because for each $H\in\{1,i\}^V$ and $U\in U(1)^E$ 
$$
\mathrm{Re}^2\,(H(A)^*U(AB)H(B))=
\begin{cases}
  \cos^2\theta(AB), & \mbox{if } H(A)=H(B), \\
  \sin^2\theta(AB), & \mbox{if } H(A)\ne H(B).
\end{cases}
\vspace{-0.8cm}
$$
\end{proof}

\begin{example} \label{ex-f'} Consider the hexagon $\Omega$ and the random variable $f$ from Example~\ref{ex-f}.
Let $h(A_n)$ be $1$ for $n = 1,3,4,6$ and be $i$ for $n=2,5$.
Then the crossing probability for site percolation on $\Omega$ 
equals the expectation of the random variable
$$
f'(\exp(i\theta_1),\dots,\exp(i\theta_6),\exp(i\eta)):=f(\theta_1+\eta,
\theta_2-\eta,\theta_3+\eta,\dots,\theta_6-\eta)
$$
in LinearizedHiggs$(U(1),U(1),0,\infty,\Omega,h)$.
The random variable $f'$ \emph{is} now gauge invariant.
\end{example}

The following proposition shows that for gauge-invariant random variables the choice of $S$ is not essential. This allows to fix a particular value of $H$ instead of summation over all $H$ (just like in Example~\ref{ex-f'} where $f$ was obtained from $f'$ by fixing $\eta$ to zero).

\begin{proposition} \label{l-S}
  Let $G=U(1)$ or $SU(2)$, and $S\subset G$.
  Then for each gauge-invariant integrable function
  $f\colon G^{E}\times G^{V}\to\mathbb{R}$,
  we have
  $$
  \int_{G^{E}\times S^{V}} f(U,H)P'_{\beta,\lambda,S}(U,H)\,dUdH=
  \int_{G^{E}\times G^{V}} f(U,H)P'_{\beta,\lambda,G}(U,H)\,dUdH.
  $$
\end{proposition}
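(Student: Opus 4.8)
The plan is to prove the stronger statement that $f$ has the same expectation in the two models. Write $W(U,H)$ for the common unnormalized weight of Definition~\ref{def-lattice-Higgs}, so that $P'_{\beta,\lambda,S}=W/Z'_{\beta,\lambda,S}$ with $Z'_{\beta,\lambda,S}=\int_{G^{E}\times S^{V}}W\,dU\,dH$, and analogously for $G$. Both sides of the asserted identity then have the form $\bigl(\int fW\bigr)\big/\bigl(\int W\bigr)$, the integrals running over $G^{E}\times S^{V}$ on the left and over $G^{E}\times G^{V}$ on the right; as $f$ is integrable and $W$ bounded, Fubini applies throughout. Thus it suffices to show that this ratio is unchanged when the Higgs target $S^{V}$ is enlarged to $G^{V}$, and I would obtain this by applying a single gauge-fixing computation to $\Phi=fW$ and to $\Phi=W$.

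First I would check that $W$ is gauge invariant. Under $U(AB)\mapsto g(A)U(AB)g(B)^*$ and $H(A)\mapsto g(A)H(A)$ each plaquette product telescopes to $g(A)\,U(AB)U(BC)U(CA)\,g(A)^*$ and each edge product to $H(A)^*U(AB)H(B)$; since on $U(1)$ and $SU(2)$ the scalar part $\mathrm{Re}$ is invariant under conjugation, every factor of $W$ is preserved. As $f$ is gauge invariant by hypothesis, $\Phi=fW$ is gauge invariant as well.

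The heart of the argument is passage to the \emph{unitary gauge}. Let $H_{0}\in G^{V}$ be equal to $1$ at every nonboundary vertex and to $h$ on the boundary. Given $H\in G^{V}$, set $g_{H}(A)=H(A)^{-1}$ at nonboundary vertices and $g_{H}\equiv 1$ on the boundary; this is an admissible gauge transformation sending $H$ to $H_{0}$. For any gauge-invariant integrable $\Phi$ and each fixed $H$ I would write
\begin{equation*}
\int_{G^{E}}\Phi(U,H)\,dU=\int_{G^{E}}\Phi(\phi_{g_{H}}U,H_{0})\,dU=\int_{G^{E}}\Phi(V,H_{0})\,dV,
\end{equation*}
where the last equality is the substitution $V=\phi_{g_{H}}U$. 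On each nonboundary edge this substitution is a two-sided translation $U(AB)\mapsto g_{H}(A)U(AB)g_{H}(B)^*$, which preserves the bi-invariant Haar measure of $G$ and fixes the boundary edges, hence preserves $dU$. Since the right-hand side no longer depends on $H$, integrating over $H$ yields
\begin{equation*}
\int_{G^{E}\times S^{V}}\Phi\,dU\,dH=\mu(S)^{|V|}\int_{G^{E}}\Phi(V,H_{0})\,dV,
\end{equation*}
where $\mu$ is the counting or Lebesgue measure on $S$, and the same formula holds with $S$ replaced by $G$.

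It remains to apply this to $\Phi=fW$ and to $\Phi=W$: the factor $\mu(S)^{|V|}$ cancels in the ratio and leaves $\int_{G^{E}}(fW)(V,H_{0})\,dV\big/\int_{G^{E}}W(V,H_{0})\,dV$, the very same quantity that the $G$-integrals produce. The step I expect to need the most care is that $\phi_{g_{H}}$ pushes $H$ to the value $1$, which may fail to lie in $S$; this is legitimate precisely because gauge invariance of $f$ and of $W$ is assumed on the ambient space $G^{E}\times G^{V}\supset G^{E}\times S^{V}$, so the chain of equalities is allowed to leave $S^{V}$. The only further bookkeeping is at the boundary, where the choice $g_{H}\equiv1$ ensures that both the prescribed boundary edge values $u$ and the boundary Higgs values $h$ are preserved throughout.
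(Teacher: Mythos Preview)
Your proposal is correct and follows essentially the same gauge-fixing (unitary-gauge) argument as the paper: use gauge invariance of $fW$ and Haar invariance of $dU$ to show that $\int_{G^{E}}\Phi(U,H)\,dU$ is independent of $H$, so the Higgs integration contributes only a volume factor. The only cosmetic difference is that you fix the specific reference $H_{0}\equiv 1$ on interior vertices and work with the unnormalized weight $W$ so that $\mu(S)^{|V|}$ cancels in the ratio, whereas the paper fixes an arbitrary $H'\in S^{V}$ and carries the normalization constants along; both amount to the same computation.
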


\begin{proof}
  Fix any $H'\in S^V$. By the gauge invariance, the change of variables $U(AB)=g(A)U'(AB)g(B)^*$, $H(A)= g(A)H'(A)$ takes the right-hand side to
  \begin{multline*}
  \int_{G^{E}\times G^{V}} f(U',H')P'_{\beta,\lambda,G}(U',H')\,dU'dg
  =  \mu(G)^{|V|}\int_{G^{E}} f(U',H')P'_{\beta,\lambda,G}(U',H')\,dU'\\
  =  \mu(G)^{|V|}\frac{Z'_{\beta,\lambda,S}}{Z'_{\beta,\lambda,G}} \int_{G^{E}} f(U',H')P'_{\beta,\lambda,S}(U',H')\,dU'
  =  \mu(S)^{|V|}\int_{G^{E}} f(U',H')P'_{\beta,\lambda,S}(U',H')\,dU'.
  \end{multline*}
  Averaging over $H'\in S^V$ gives the left-hand side.
\end{proof}

\section{Linearization of non-Abelian 
lattice gauge theory}

Now generalize Example~\ref{ex-small-hexagon} and Definition~\ref{def-linearized-gauge} to non-Abelian and nonassociative gauge theories.

\begin{definition}\label{def-linearized-gauge-general}
Let $G=U(1)$, $SU(2)$ or $S^7$. Let
$S$ be the set $\{1,i\}$, $\{1,i,j,k\}$ or $\{\mathrm{e}_1,\dots,\mathrm{e}_8\}$ of the complex, quaternionic or octonionic units respectively.
Fix $0\le\beta,\lambda\le1$ and a map $h$ from the set of boundary vertices to the set $S$. Recall that
$G^{E}$ is the set of $G$-valued functions $U$ on the set of all edges such that $U(AB)=h(A)h(B)^*$ for each boundary edge $AB$. Define LinearizedGaugeTheory$(G,\beta,\Omega,h)$ and
LinearizedGaugeTheory$(G,\infty,\Omega,h)$
to be the space $G^{E}$ with the probability densities respectively
$$
P''_{\beta,S}(U)=\frac{1}{Z''_{\beta,S}}\left(
1+\beta^fZ'_{0,\infty,S}
\sum_{H\in S^{V}}P'_{0,\infty,S}(U,H)\right)
\quad\text{and}\quad
P''_{\infty,S}(U)=
\sum_{H\in S^{V}}P'_{0,\infty,S}(U,H).
$$
A \emph{coordinate reflection} $G^E\to G^E$ is a reflection in a coordinate hyperplane of space $\mathbb{R}^{2{|E|}}=\mathbb{C}^{|E|}$, $\mathbb{R}^{4{|E|}}=\mathbb{H}^{|E|}$, or $\mathbb{R}^{8{|E|}}=\mathbb{O}^{|E|}$ containing $G^E$. 
\end{definition}


%

\tmpcomment



\endtmpcomment

Let us restate Proposition~\ref{prop-linearization-intro} in a slightly more general form.

\begin{proposition}\label{prop-linearization}
  Let $G=U(1)$ or $SU(2)$, 
  $S=\{1,i\}$ or $\{1,i,j,k\}$ 
  respectively. If an integrable function  $f\colon G^{E}\to\mathbb{R}$ is invariant under each coordinate reflection $G^{E}\to G^{E}$, then
  $$
  \int_{G^{E}} f(U)P'(U)\,dU=
  \int_{G^{E}} f(U)P_{\beta,S}''(U)\,dU.
  $$
\end{proposition}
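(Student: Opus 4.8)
The plan is to reduce Proposition~\ref{prop-linearization} to the computation already carried out in Example~\ref{ex-small-hexagon}, but run in full generality on an arbitrary triangulated polygon. First I would expand the product $P'(U)=\frac{1}{Z'}\prod_{ABC\in F}\bigl(1+\beta\,\mathrm{Re}(U(AB)U(BC)U(CA))\bigr)$ into a sum over all subsets $T\subseteq F$ of triangles, each contributing a factor $\beta^{|T|}\prod_{ABC\in T}\mathrm{Re}(U(AB)U(BC)U(CA))$. For $G=U(1)$, writing $U(AB)=\cos\theta(AB)+i\sin\theta(AB)$, each such real part is $\pm\sin(\theta_1+\theta_2+\theta_3)$ or similar, which on expanding by the angle-addition formulas becomes a sum of monomials in $\sin\theta$, $\cos\theta$ over the three edges of the triangle. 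The key observation is the Feynman-diagram bookkeeping from Example~\ref{ex-small-hexagon}: to each monomial one assigns a collection of segments joining triangle centers to edge midpoints, with the rule that a factor $\sin\theta(AB)$ places a segment entering the edge $AB$ from the chosen triangle and a factor $\cos\theta(AB)$ places none.

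The crucial integrability step is that $\int_0^{2\pi}\cos\theta\,d\theta=\int_0^{2\pi}\sin\theta\,d\theta=\int_0^{2\pi}\sin\theta\cos\theta\,d\theta=0$, while the even powers $\cos^2$ and $\sin^2$ integrate to $\pi$. Hence the only monomials surviving integration against a coordinate-reflection-invariant $f$ are those in which each edge variable $\theta(AB)$ appears to an even power in $\sin$ and $\cos$ jointly — equivalently, each interior edge midpoint is met by either zero or two segments. I would argue that this parity condition forces the segments to form closed broken lines whose endpoints all lie on boundary edges, and that these configurations biject with the spin assignments $\Sigma\in\{+1,-1\}^V$ separating like-colored from unlike-colored vertices: an edge $AB$ carries $\sin^2\theta(AB)$ exactly when $\Sigma(A)\ne\Sigma(B)$ and $\cos^2\theta(AB)$ when $\Sigma(A)=\Sigma(B)$. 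This reproduces precisely the summand $P(U,\Sigma)$ of Definition~\ref{def-linearized-gauge}, and the accompanying power of $\beta$ must be $\beta^{|F|}$ since each surviving configuration uses a nontrivial summand in every triangle. Terms where some triangle contributes its $1$ cancel by the same $\sin\theta\cos\theta$-vanishing argument.

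For the non-Abelian case $G=SU(2)$, I would not work with explicit angle variables but instead invoke the reformulation through the Higgs field: by Definition~\ref{def-linearized-gauge-general}, $P''_{\beta,S}$ is built from $\sum_{H\in S^V}P'_{0,\infty,S}(U,H)=\sum_{H\in\{1,i,j,k\}^V}\prod_{AB\in E}\mathrm{Re}^2(H(A)^*U(AB)H(B))$. The plan is to show that expanding $\prod_{ABC\in F}\bigl(1+\beta\,\mathrm{Re}(\cdots)\bigr)$ and integrating against a coordinate-reflection-invariant $f$ again kills every monomial except those matching some $H\in\{1,i,j,k\}^V$. Here the analogue of the $\sin/\cos$ parity is that each coordinate of $U(AB)\in\mathbb{R}^4$ must appear to even total degree for survival under reflection invariance, and the four units $\{1,i,j,k\}$ index exactly the four real coordinates that can be selected; the product $\prod_{AB}\mathrm{Re}^2(H(A)^*U(AB)H(B))$ packages these even-degree selections. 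I expect this is where one must verify a quaternionic identity replacing the elementary trigonometry of the Abelian case.

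The main obstacle will be the combinatorial verification that the parity constraint on segment-multiplicities is both necessary and sufficient, and in particular that the surviving configurations are in exact bijection with $\{+1,-1\}^V$ (resp.\ $\{1,i,j,k\}^V$) with the correct trigonometric weight on each edge — the boundary bookkeeping (where $U(AB)$, $\Sigma(A)$, $H(A)$ are fixed to prescribed values) has to be threaded carefully so that the boundary data $\sqrt\sigma$, resp.\ $h$, matches on both sides. The Abelian half is essentially Example~\ref{ex-small-hexagon} done in general and should go through cleanly; the genuine difficulty is establishing the non-Abelian selection rule at the level of the four real coordinates of each quaternion, where the cross-terms in $\mathrm{Re}(U(AB)U(BC)U(CA))$ are more intricate than a single sine or cosine and one must confirm that reflection invariance of $f$ still annihilates all of them except the diagonal $H$-indexed contributions.
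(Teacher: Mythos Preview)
Your overall architecture is exactly that of the paper: expand the product over faces, kill odd-degree monomials by coordinate-reflection invariance, and match the survivors against a sum over vertex labelings. For $G=U(1)$ your account is essentially complete; the only point you gloss over is why taking the summand $1$ in a \emph{proper} nonempty subset of faces never survives --- this needs the connectedness of the dual graph of the triangulated polygon, so that any such subset has an interior edge on its boundary, producing a lone first power of some $\sin\theta$ or $\cos\theta$.

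For $G=SU(2)$, however, your sketch stops short of two specific steps that the paper isolates, and without them the argument does not close. After the basis decomposition $U(AB)=\sum_{a\in\{1,i,j,k\}}aU_a(AB)$ and the reflection-invariance pruning, the surviving terms are indexed not by $H\in\{1,i,j,k\}^V$ but by edge labelings $u\in\{1,i,j,k\}^E$ satisfying $u(AB)u(BC)u(CA)=\pm1$ on every face, and each such term carries the sign $\prod_{ABC}u(AB)u(BC)u(CA)\in\{\pm1\}$. Two things must then be proved: first, that this global sign is always $+1$ --- this is the content of the paper's Lemma~\ref{l-plus}, established by an induction peeling off boundary triangles, and it is not ``a quaternionic identity'' but a genuinely combinatorial-topological fact that fails over the octonions (Remark~\ref{ex-octonion}); second, that the map $H\mapsto(AB\mapsto \pm H(A)H(B)^*)$ is a bijection from $\{1,i,j,k\}^V$ onto the set of such admissible edge labelings, which is the statement $H^1(\Omega;\mathbb{Z}/2\mathbb{Z}\oplus\mathbb{Z}/2\mathbb{Z})=0$ for the simply connected polygon (using $Q_8/\{\pm1\}\cong(\mathbb{Z}/2\mathbb{Z})^2$). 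Your proposal anticipates that a bijection must be checked but does not locate either of these ingredients; in particular the sign issue is invisible in the Abelian case and is the step most likely to be overlooked.
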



We present a proof for $G=SU(2)$; the one for $U(1)$ is simpler and is easily obtained from that. The argument is similar to Example~\ref{ex-small-hexagon} but uses regularly edge-$3$-colored graphs instead of broken lines.

\begin{lemma} \textup{(Cf.~\cite[p.~235]{Penrose-71})} \label{l-plus}
Let $h$ be a map from the set of boundary vertices of $\Omega$ to $SU(2)$. Take $U\in SU(2)^E$ such that $U(AB)=h(A)h(B)^*$ for each boundary edge $AB$ and $U(AB)U(BC)U(CA)=\pm 1$ for each face $ABC$. Then $\Pi_{ABC}U(AB)U(BC)U(CA)=1$.
\end{lemma}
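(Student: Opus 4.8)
The plan is to reduce modulo the center $\{\pm1\}$, trivialize the resulting flat $SO(3)$-connection, lift the trivialization back to $SU(2)$ while respecting the prescribed boundary values, and then read off that each plaquette becomes a product of central signs that cancel in pairs.

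First, let $\pi\colon SU(2)\to SO(3)=SU(2)/\{\pm1\}$ be the canonical double cover. Since every plaquette $U(AB)U(BC)U(CA)$ lies in $\ker\pi=\{\pm1\}$ by hypothesis, the reduced assignment $\bar U:=\pi\circ U$ satisfies $\bar U(AB)\bar U(BC)\bar U(CA)=1$ on each face, i.e.\ it is a \emph{flat} $SO(3)$-connection. Because $\Omega$ is a polygon, hence a connected, simply connected $2$-complex, a flat connection is gauge-trivial: transporting a base value along edge-paths and invoking simple connectivity for well-definedness produces a map $\bar g$ on \emph{all} vertices (boundary included) with $\bar U(AB)=\bar g(A)\bar g(B)^{-1}$ for every edge.

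Next I match the boundary and lift. On each boundary edge $\bar U(AB)=\pi(h(A))\pi(h(B))^{-1}$, so $(\pi\circ h)^{-1}\bar g$ is constant along $\partial\Omega$; as $\partial\Omega$ is a single cycle it is globally constant there, and right-multiplying $\bar g$ by a suitable constant (which preserves the relation above) arranges $\bar g=\pi\circ h$ on $\partial\Omega$. Choosing any lift $g$ of $\bar g$ to $SU(2)$ with $g=h$ on the boundary and arbitrary preimages inside, the identity $\pi(U(AB))=\pi(g(A)g(B)^{-1})$ forces
\[
U(AB)=\epsilon(AB)\,g(A)g(B)^{-1},\qquad \epsilon(AB)\in\{\pm1\},
\]
and the boundary condition $U(AB)=h(A)h(B)^{*}=g(A)g(B)^{-1}$ gives $\epsilon(AB)=+1$ on every boundary edge.

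Finally, substituting this form into a plaquette makes the $g$'s telescope and the central signs factor out, so $U(AB)U(BC)U(CA)=\epsilon(AB)\epsilon(BC)\epsilon(CA)$. Hence the product over all faces equals $\prod_{e}\epsilon(e)^{n(e)}$, where $n(e)$ counts the faces incident to the edge $e$ (the sign attached to an edge does not depend on its orientation, since $\epsilon=\pm1$). Each interior edge borders exactly two faces and contributes $\epsilon(e)^2=1$, while each boundary edge borders one face but has $\epsilon(e)=1$; therefore the total product is $1$. I expect the main obstacle to be the boundary bookkeeping rather than the flatness argument: one must choose the lift $g$ so that it actually agrees with $h$ on $\partial\Omega$, since otherwise the single-face boundary edges could contribute uncancelled signs. (The hint about regularly edge-$3$-colored graphs and the reference to Penrose suggest a more combinatorial route; the above is a self-contained alternative.)
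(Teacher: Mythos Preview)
Your proof is correct and takes a genuinely different route from the paper. The paper argues by induction on the number of faces: it picks a triangle $ABC$ with $AB$ on the boundary and $BC$ interior, flips the sign of $U(BC)$ if necessary (which leaves the total product unchanged) so that this plaquette equals $+1$, extends $h$ to the newly exposed vertex $C$ via $h(C):=U(CA)h(A)$, removes the triangle, and recurses. Your argument is instead global and cohomological: you descend to $SO(3)$ to obtain a flat connection, trivialize it by simple connectivity of $\Omega$, lift the potential back to $SU(2)$ while matching $h$ on the boundary, and then observe that the residual signs $\epsilon(e)$ form a $\mathbb{Z}/2\mathbb{Z}$-valued $1$-cochain whose face-products are the plaquette signs, so that the product over all faces equals $\prod_e\epsilon(e)^{n(e)}=1$. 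The paper's induction is more elementary and self-contained; your approach is more conceptual, makes explicit why simple connectivity of $\Omega$ is the essential hypothesis, and in fact anticipates the ``standard homological argument'' the paper itself invokes a few lines later (the 5th equality in the proof of Proposition~\ref{prop-linearization}) to pass from cocycles $u$ to potentials $H$.
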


\begin{proof}[Proof of Lemma~\ref{l-plus}]
  Use induction over the number of faces. If $\Omega$ has a single face $ABC$, then $$U(AB)U(BC)U(CA)=h(A)h(B)^*h(B)h(C)^*h(C)h(A)^*=1.$$
  Otherwise let $ABC$ be a face such that the edge $AB$ is on the boundary and $BC$ is not on the boundary. If $U(AB)U(BC)U(CA)=-1$ then change the sign of $U(BC)$; this does not affect the product over all faces in question. We get $U(AB)U(BC)U(CA)=+1$. Set $h(C)=U(CA)h(A)$. Then $U(CA)=h(C)h(A)^*$ and $$U(BC)=U(AB)^*\cdot U(AB)U(BC)U(CA)\cdot U(CA)^*=h(B)h(A)^*\cdot 1\cdot h(A)h(C)^*=h(B)h(C)^*.$$ Remove the triangle $ABC$ from $\Omega$. Applying the inductive hypothesis to the remaining polygon(s), we arrive at the required assertion.
\end{proof}

\begin{remark}\label{ex-octonion}
  Lemma~\ref{l-plus} does not hold for $SU(2)$ replaced by $S^7$. E.g., if $\Omega$ has a single face $ABC$, $h(A)=ab,h(B)=b,h(C)=c$, where $a,b,c$ are octonion units satisfying $(ab)c=-a(bc)$, then
  $(U(AB)U(BC))U(CA)=(((ab)b^*)(bc^*))(c(ab)^*)=(-c(ab))^2=-1$.
\end{remark}

\begin{proof}[Proof of~Propositions~\ref{prop-linearization-intro} and~\ref{prop-linearization} for $G=SU(2)$]
The propositions follow from the set of equalities
\begin{align*}
Z'\int_{G^{E}} f(U)P'(U)\,dU&=
\int_{G^{E}} f(U)\,dU\,\prod_{ABC}\left(1+\beta\,
\mathrm{Re}\,(U(AB)U(BC)U(CA))\right)\\
&=
\int_{G^{E}} f(U)\,dU\,\prod_{ABC}\left(1+\beta\,
\sum_{a,b,c\in S: abc=\pm 1} abc U_{a}(AB)U_b(BC)U_c(CA)\right)\\
&=
\int_{G^{E}} f(U)\,dU\,\left(1+\beta^{|F|}\,
\sum_{\substack{
u\in S^{E}:\\
u(AB)u(BC)u(CA)=\pm 1
}}
\prod_{ABC}u(AB)u(BC)u(CA)\prod_{AB\in E}U_{u(AB)}(AB)^2\right)\\
&=
\int_{G^{E}} f(U)\,dU\,\left(1+\beta^{|F|}\,
\sum_{\substack{
u\in S^{E}:\\
u(AB)u(BC)u(CA)=\pm 1
}}
\prod_{AB\in E}U_{u(AB)}(AB)^2\right)\\
&=
\int_{G^{E}} f(U)\,dU\,\left(1+\beta^{|F|}\,\sum_{H\in S^V}\prod_{AB\in E}\mathrm{Re}^2\,(H(A)^*U(AB)H(B))\right)\\
&=
Z_{\beta,S}''\int_{G^{E}} f(U)P_{\beta,S}''(U)\,dU.
\end{align*}
Setting $f(U)=1$, we get $Z'=Z_{\beta,S}''$, and we are done. It remains to explain the equalities.

Here the first and the last equalities follow from Definitions~\ref{def-lattice-gauge} and~\ref{def-linearized-gauge-general}.

In the 2nd equality we use basis decomposition
 $U(AB)=:U_1(AB)+iU_i(AB)+jU_j(AB)+kU_k(AB)$.

In the 3rd equality the summation is over all functions $u$ on edges such that
\begin{itemize}
\item $u(AB)=U(AB)=h(A)h(B)^*\in Q_8:=\{\pm 1,\pm i,\pm j,\pm k\}$ for each boundary edge $AB$,
\item $u(AB)\in S=\{1,i,j,k\}$ for each nonboundary edge $AB$, and
\item $u(AB)u(BC)u(CA)=\pm 1$ for each face $ABC$.
\end{itemize}
The sum is obtained by expanding the product in the previous line. By the coordinate reflection invariance, the terms containing the first power of $U_{a}(AB)$ for some nonboundary edge $AB$ do not contribute to the integral. Hence only the product of square factors $U_{a}(AB)^2$ over all nonboundary edges $AB$ survives (and the free term). For a boundary edge $AB$, the factor $aU_{a}(AB)$
is nonzero, only if $a=\pm U(AB)$ and hence $aU_{a}(AB)=u(AB)$.


The 4th equality follows from Lemma~\ref{l-plus}.

In the 5th equality, to every $u\in S^E$ such that $u(AB)u(BC)u(CA)=\pm 1$ for each face $ABC$,
we assign the unique $H\in S^V$ such that $u(AB)=\pm H(A)H(B)^*$ for each edge $AB$ (and $H(A)=h(A)$ for each boundary vertex $A$). This is possible by a standard ``homological'' argument (the existence of~$H$ is equivalent to the assertion $H^1(\Omega;\mathbb{Z}/2\mathbb{Z}\oplus \mathbb{Z}/2\mathbb{Z})=0$ because $Q_8/\{\pm 1\}\cong \mathbb{Z}/2\mathbb{Z}\oplus \mathbb{Z}/2\mathbb{Z}$). Then $U_{u(AB)}(AB)=\pm \mathrm{Re}\,(H(A)^*U(AB)H(B))$, which completes the proof.
\end{proof}


%

\begin{definition} 
Define Potts$(|S|,\infty,\Omega,h)$ to be the space $S^V$ with counting measure divided by $|S|^{|V|}$.
\end{definition}

\begin{proof}[Proof of the 3rd isomorphism in Proposition~\ref{th-split-intro}] This is straightforward:
Let $\nu$ and $\mu$ be the probability measures in $\mathrm{LinearizedGaugeTheory}(SU(2),\infty,\Omega,h)$ and
$\mathrm{LinearizedHiggs}(SU(2),S,\infty,\Omega,h)$ respectively, and $\lambda$ be the Lebesque measure on $SU(2)^E\times S^{V}$, where $S=\{1,i,j,k\}$.
Then for each measurable $A\subset SU(2)^E$
$$
\nu(A)=\int_A P''_{\infty,S}(U)\,dU
=\int_A\,
\sum_{H\in S^{V}}P'_{0,\infty,S}(U,H)\,dU
=\int_{A\times S^{V}}P'_{0,\infty,S}(U,H)\,d\lambda
=\mu(A\times S^{V}).
$$
%
Since $\int_{SU(2)^{E}}P'_{0,\infty,\{1,i,j,k\}}(U,H)\,dU$
is the same
for each $H$, 
we also have $\mu(SU(2)^E\times B)={|B|}/{4^{|V|}}$.
\end{proof}


\section{Variations}

In a sense, Proposition~\ref{prop-linearization} holds for
$G=S^2=\{z\in\mathbb{H}:|z|=1,\mathrm{Re}\,z=0\}$. Surprisingly, although $S^2$ lives in $3$-dimensional space, we still have the Potts model with $4$ colors, but now antiferromagnetic.

\begin{definition} 
Define AntiferromagneticPotts$(|S|,0,\Omega,h)$ to be the subspace of Potts$(|S|,\infty,\Omega,h)$ formed by $H\in S^V$ such that $H(A)\ne H(B)$ for each edge $AB$.
\end{definition}

\begin{proposition} \label{th-split-S2}
For each polygon $\Omega$ triangulated by regular triangles, each map $h$ from the set of boundary vertices to $\{1,i,j,k\}$ such that $h(A)\ne h(B)$ for each edge $AB$, and each function $f\colon (S^2)^E\to\mathbb{R}$ invariant under each coordinate reflection of $\mathbb{R}^{3|E|}\supset (S^2)^E$, we have
\begin{multline*}
  \int\limits_{(S^2)^{E}} dU\,f(U)\prod_{ABC\in F} \left(1+\beta
  \,\mathrm{Re}\,(U(AB)U(BC)U(CA))
  \right)
  =\\=
  \int\limits_{(S^2)^{E}}  dU\,f(U)\left(1+\beta^{|F|}\,
  \sum_{H\in \mathrm{AntiferromagneticPotts}(4,0,\Omega,h) }\prod_{AB\in E} \mathrm{Re}^2\,(H(A)^*U(AB)H(B))\right),
  \end{multline*}
\end{proposition}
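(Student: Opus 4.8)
I would mimic the proof of Proposition~\ref{prop-linearization} for $G=SU(2)$, the only change being that now $U(AB)=iU_i(AB)+jU_j(AB)+kU_k(AB)$ is purely imaginary, so the basis $\{1,i,j,k\}$ is replaced by $\{i,j,k\}$. The first step is to expand each face factor by multilinearity:
\[\mathrm{Re}(U(AB)U(BC)U(CA))=\sum_{a,b,c\in\{i,j,k\}}\mathrm{Re}(abc)\,U_a(AB)U_b(BC)U_c(CA),\]
where $\mathrm{Re}(abc)=0$ unless $\{a,b,c\}=\{i,j,k\}$ and is a sign otherwise (recall $\mathrm{Re}(ijk)=-1$). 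Expanding $\prod_{ABC\in F}(1+\beta\,\mathrm{Re}(\cdots))$ and integrating against the coordinate-reflection-invariant $f$, I would discard, exactly as in Example~\ref{ex-small-hexagon}, every resulting monomial in which some nonboundary coordinate $U_d(AB)$ occurs to an odd power.

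A surviving term corresponds to a subset $T\subseteq F$ of faces together with a colour $u(AB)\in\{i,j,k\}$ on each edge of each face of $T$. A nonboundary edge lies in two faces, and evenness forces these either both to lie outside $T$ or both to lie in $T$ and to assign the edge the \emph{same} colour; since the faces of $\Omega$ form a connected graph under adjacency across nonboundary edges, only $T=\emptyset$ (the free term $1$) and $T=F$ (a factor $\beta^{|F|}$) survive. For $T=F$ the admissible $u$ are exactly the proper edge-$3$-colourings, i.e.\ each face carries all three colours; on a boundary edge $AB$ the colour is forced to be the imaginary unit parallel to the fixed value $h(A)h(B)^*$, any other choice killing that coordinate.

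It remains to reorganise $\beta^{|F|}\sum_u\prod_{AB\in E}U_{u(AB)}(AB)^2$ as the right-hand side. Identifying $\{1,i,j,k\}$ with $Q_8/\{\pm1\}\cong\mathbb{Z}/2\oplus\mathbb{Z}/2$, an antiferromagnetic colouring $H\in\{1,i,j,k\}^V$ assigns to each edge the unit parallel to $H(A)H(B)^*$, which is nonzero precisely because $H(A)\ne H(B)$; conversely the homological argument of Proposition~\ref{prop-linearization} (that $H^1(\Omega;\mathbb{Z}/2\oplus\mathbb{Z}/2)=0$) reconstructs $H$ from $u$ and the boundary data. This gives a bijection between proper edge-$3$-colourings extending the boundary and $\mathrm{AntiferromagneticPotts}(4,0,\Omega,h)$, under which $U_{u(AB)}(AB)^2=\mathrm{Re}^2(H(A)^*U(AB)H(B))$; the antiferromagnetic constraint is exactly what prevents this square from vanishing, since $\mathrm{Re}(H(A)^*U(AB)H(B))=0$ whenever $H(A)=H(B)$ and $U(AB)$ is purely imaginary.

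The step I expect to be the main obstacle is the overall sign: I must verify that the constant coefficient $Q$ standing in front of each $\prod_{AB\in E}U_{u(AB)}(AB)^2$ is $+1$ and not merely $\pm1$. Here the proof genuinely departs from the $SU(2)$ case, where Lemma~\ref{l-plus} settled this at once, because reversing the orientation of an $S^2$-edge negates its value. Writing $u(AB)=\tau(AB)\,H(A)H(B)^*$ with $\tau(AB)=\pm1$ and telescoping around a face gives $\mathrm{Re}(u(AB)u(BC)u(CA))=\tau(AB)\tau(BC)\tau(CA)$; since each nonboundary edge is traversed with opposite orientations by its two faces, its two $\tau$'s multiply to $-1$, contributing a factor $(-1)^{|E|}$ to the face product. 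I expect this to be cancelled by a second $(-1)^{|E|}$ from the variable part, where the two orientations of a nonboundary edge contribute $U_{u(AB)}(AB)$ and $U_{u(AB)}(BA)=-U_{u(AB)}(AB)$. Each boundary edge then contributes $+1$, because $h(A)h(B)^*=\rho\,u(AB)$ and $u(AB)=\tau\,h(A)h(B)^*$ force $\rho=\tau$; hence $Q=1$. Setting $f\equiv1$ shows the two normalisations coincide, which finishes the proof.
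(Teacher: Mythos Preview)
Your overall route is exactly the paper's: the paper's proof is one sentence, saying that the chain of equalities from Proposition~\ref{prop-linearization} goes through once $S$ is replaced by $\{i,j,k\}$ in the second and third equalities and $S^V$ by $\mathrm{AntiferromagneticPotts}(4,0,\Omega,h)$ in the fifth, because $\pm H(A)H(B)^*=u(AB)\in\{i,j,k\}$ forces $H(A)\ne H(B)$. Your identification of the surviving terms, of the proper edge-$3$-colourings, and of the bijection with antiferromagnetic $4$-colourings is all correct and matches the paper.

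Your sign discussion, however, is both unnecessary and faulty. It is unnecessary because Lemma~\ref{l-plus} is stated for $SU(2)$-valued data and applies to the colouring $u$ verbatim: the values $u(AB)\in\{i,j,k\}$ and the boundary values $h(A)h(B)^*$ all lie in $SU(2)$, and the face condition $u(AB)u(BC)u(CA)=\pm1$ holds, so the fourth equality of Proposition~\ref{prop-linearization} carries over unchanged and there is no genuine departure from the $SU(2)$ case. It is faulty because under the paper's convention (Definition~\ref{def-lattice-gauge}) every edge carries a single fixed orientation in the direction of a cube root of $1$, and \emph{both} incident faces use that same orientation; this is exactly why the third equality of Proposition~\ref{prop-linearization} produces the square $U_{u(AB)}(AB)^2$ directly. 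With the correct convention each nonboundary edge contributes $\tau(AB)^2=1$ to $\prod_{ABC}\tau(AB)\tau(BC)\tau(CA)$ and each boundary edge contributes $\tau=1$, so $Q=1$ immediately and neither of your $(-1)^{|E|}$ factors arises. (Even under your hypothetical opposite-orientation convention your two factors would not cancel: since both $u(AB)$ and $H(A)H(B)^*$ are purely imaginary, they change sign together under reversal, whence $\tau(BA)=\tau(AB)$ and only the variable-part $(-1)^{|E|}$ would survive.)
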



\begin{proof} The proof is the same as that of Proposition~\ref{prop-linearization}, 
but $S$ should be replaced by $\{i,j,k\}$ in the 2nd and the 3rd equality, whereas  $S^V$ should be replaced by {AntiferromagneticPotts}$(4,0,\Omega,h)$ in the 5th equality,
because $\pm H(A)H(B)^*=u(AB)\in\{i,j,k\}$ implies that $H(A)\ne H(B)$.
%
\end{proof}





\begin{figure}[htbp]
\centering
\includegraphics[width=2.7cm]{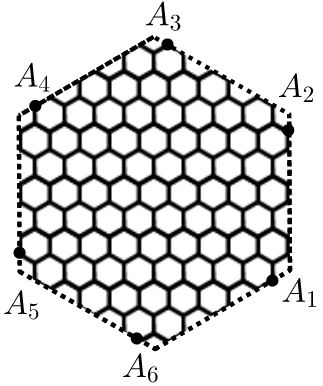}
\caption{Simultaneous crossing; see Conjecture~\ref{conj-independence}}
\label{fig1}
\end{figure}
The above results suggest that the $4$-state Potts model might have a special interest. In that model, the simplest event not ``reducing'' to $2$-state ones is the following ``simultaneous crossing''.

\begin{conjecture}\label{conj-independence} (see Figure~\ref{fig1})
Take a polygon $\Omega$ with the boundary composed of the edges of a hexagonal lattice of mesh $h$. Let $A_1,A_2,A_3,A_4,A_5,A_6$ be some boundary vertices. To each hexagon of the lattice, assign one of the $4$ colors independently with probability $1/4$. Consider the $3$ events
\begin{align*}
E_1&=(\text{$A_1A_2$ and $A_4A_5$ are joined by a path formed by hexagons of colors 1 or 4}),\\
E_2&=(\text{$A_2A_3$ and $A_5A_6$ are joined by a path formed by hexagons of colors 2 or 4}),\\
E_3&=(\text{$A_3A_4$ and $A_6A_1$ are joined by a path formed by hexagons of colors 3 or 4}).
\end{align*}
Then these $3$ events become mutually independent in the limit when $h\to 0$ and $(\Omega, A_1,A_2,A_3,A_4,A_5,A_6)$ approaches a planar domain (having rectifiable Jordan boundary) with $6$ distinct boundary points, in the Caratheodory sense. That is, $P(E_1 \cap E_2 \cap E_3) - P(E_1)P(E_2)P(E_3) \to 0$ under this limit.
\end{conjecture}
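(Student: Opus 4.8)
\emph{Sketch of a possible approach.} The plan is to decode the $4$-coloring into percolation language and reduce the conjecture to a decorrelation estimate for three coupled critical percolations. Identify the colors $\{1,2,3,4\}$ with the four triples $(\omega_1,\omega_2,\omega_3)\in\{0,1\}^3$ having an odd number of $1$'s, namely $(1,0,0),(0,1,0),(0,0,1),(1,1,1)$ for colors $1,2,3,4$. Then a hexagon is admissible for $E_m$ precisely when $\omega_m=1$, and a short check shows that $\omega_1,\omega_2$ are independent fair coins while $\omega_3=\overline{\omega_1\oplus\omega_2}=1\oplus\omega_1\oplus\omega_2$ is the complement of their XOR; in the $\pm1$ encoding $x_v,y_v,z_v:=(-1)^{\omega_1(v)},(-1)^{\omega_2(v)},(-1)^{\omega_3(v)}$ this reads $z_v=-x_vy_v$. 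Thus $E_1,E_2,E_3$ are crossing events in three percolations any two of which are independent: indeed $E_1,E_2$ are exactly independent since $\{x_v\}\perp\{y_v\}$, and each mixed pair is exactly independent since $z_v=-x_vy_v$ makes $\{z_v\}$ a fair field independent of $\{x_v\}$ (and of $\{y_v\}$) alone. The entire content of the conjecture is therefore the \emph{triple} correlation.

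Next I would convert the triple defect into a spectral sum. Writing $f_m=\mathbb{1}_{E_m}$ and expanding in the Fourier--Walsh basis $\chi_R=\prod_{v\in R}(\cdot)_v$, the relation $z_v=-x_vy_v$ together with the independence of $\{x_v\}$ and $\{y_v\}$ gives the exact finite-mesh identity
\begin{equation*}
P(E_1\cap E_2\cap E_3)-P(E_1)P(E_2)P(E_3)=\sum_{R\neq\varnothing}(-1)^{|R|}\,\widehat{f_1}(R)\,\widehat{f_2}(R)\,\widehat{f_3}(R).
\end{equation*}
(The term $R=\varnothing$ reproduces exactly $P(E_1)P(E_2)P(E_3)$, and the vanishing of the analogous two-index sums recovers the pairwise independence noted above.) The goal becomes to prove that this diagonal triple sum tends to $0$ as the mesh $h\to 0$.

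I would then bound the sum by splitting the frequencies at a level $k$. For the low part $0<|R|\le k$, Cauchy--Schwarz together with $|\widehat{f_3}(R)|\le 1$ bounds the contribution by $\bigl(\sum_{0<|R|\le k}\widehat{f_1}(R)^2\bigr)^{1/2}$, which tends to $0$ for each fixed $k$ by the noise-sensitivity theorem of Benjamini, Kalai and Schramm for critical percolation crossings (the spectral mass escapes to infinity). The scaling-limit input, Smirnov's theorem \cite{Smirnov-01} and conformal invariance, enters here to guarantee that each $E_m$ is a genuine macroscopic crossing to which the noise-sensitivity estimates apply uniformly along the approximating sequence $(\Omega,A_1,\dots,A_6)$.

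The main obstacle is the high-frequency tail $|R|>k$. Controlling it amounts to showing that the spectral measure of a crossing event cannot concentrate on any single large frequency set, i.e.\ $\max_{R\neq\varnothing}|\widehat f(R)|\to 0$ (equivalently, the maximal atom of the Garban--Pete--Schramm spectral sample vanishes), for then $\sum_{R\neq\varnothing}|\widehat{f_1}\widehat{f_2}\widehat{f_3}|\le(\max_{R\neq\varnothing}|\widehat{f_3}(R)|)\sqrt{P(E_1)P(E_2)}\to 0$. I expect this delocalization of the spectral sample to be the crux: although the typical spectral sample of percolation crossing is known to have diverging, fractal size and to be highly spread out, ruling out an exceptional order-one atom at some large $R$ uniformly in the mesh requires quantitative spectral estimates beyond the qualitative statement. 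A conceptual reformulation that may be more tractable is purely continuum: show that the scaling limit of $\omega_3=\overline{\omega_1\oplus\omega_2}$ is independent of the joint scaling limit of $(\omega_1,\omega_2)$ in the quad-crossing topology of Schramm--Smirnov. This is a maximal-noise decorrelation statement, expressing that XOR-ing two independent percolations produces, at macroscopic scales, a third percolation independent of the pair, even though microscopically it is a deterministic function of them.
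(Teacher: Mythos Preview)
The statement you are attempting to prove is labelled \emph{Conjecture} in the paper and is not proved there; the paper offers only the remark that the three events are trivially pairwise independent, gives a one-hexagon example showing they are not mutually independent at finite mesh, and cites numerical evidence. In the acknowledgements it is mentioned that a proof has been suggested by Izyurov and Magazinov in private communication, but no argument is reproduced. There is therefore nothing in the paper to compare your proposal against.

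On the merits of your sketch itself: the bit encoding and the resulting exact identity
\[
P(E_1\cap E_2\cap E_3)-P(E_1)P(E_2)P(E_3)=\sum_{R\neq\varnothing}(-1)^{|R|}\widehat{f_1}(R)\widehat{f_2}(R)\widehat{f_3}(R)
\]
are correct and give a clean reformulation; in particular this immediately explains the pairwise independence the paper observes. Your treatment of the low-frequency part via BKS noise sensitivity is also sound. The genuine gap is exactly where you locate it: controlling the high-frequency tail requires $\max_{R\neq\varnothing}|\widehat{f_m}(R)|\to 0$, i.e.\ that the spectral sample of a macroscopic crossing has no atom of order one. This is not a consequence of qualitative noise sensitivity (which only says the spectral mass leaves any fixed level set), and the Garban--Pete--Schramm description of the spectral sample gives information about its typical size and geometry rather than a uniform bound on individual Fourier coefficients. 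Your alternative continuum reformulation, that the XOR of two independent percolations is asymptotically independent of the pair in the quad-crossing topology, is an honest restatement of the difficulty rather than a reduction of it. So as written this is a plausible programme with the crux correctly identified, but not a proof.
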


\begin{remark}
Informally, there are 3 percolating fluids, each hexagon is open either for just one fluid or for three altogether, and we study simultaneous percolation of the fluids.

Notice that the 3 events are obviously pairwise independent but not mutually independent in general (take $\Omega$ to be just one hexagon). There is some numerical evidence for the conjecture.
\end{remark}

%







\subsection*{Acknowledgements}

For the latter conjecture, there have been suggested a proof by K.~Izyurov and A.~Magazinov, as well as interesting generalizations by 
M.~Fedorov and I.~Novikov (private communication)  \cite{Fedorov-19, Novikov-19}.
The author is grateful to
D.~Chelkak,
H. Duminil-Copin,
M.~Khristoforov,
and
S.~Smirnov for useful discussions.

{
\footnotesize

}

\small
\vspace{-0.2cm}
\noindent
\textsc{Mikhail Skopenkov\\
National Research University Higher School of Economics (Faculty of Mathematics) \&\\
Institute for Information Transmission Problems, Russian Academy of Sciences} 
\\
\texttt{mikhail.skopenkov\,@\,gmail$\cdot $com} \quad \url{http://www.mccme.ru/~mskopenkov}

\end{document}